\documentclass[sigconf ]{acmart}
\usepackage{enumitem}
\usepackage{amsmath}
  
\usepackage{tcolorbox}
\usepackage{amssymb}
\usepackage{bm}
\usepackage{nicefrac}
\usepackage{booktabs}
\usepackage{array}
\usepackage{multirow}
\usepackage{threeparttable}
\usepackage{makecell}
\usepackage[procnumbered,ruled,vlined,linesnumbered]{algorithm2e}
\usepackage{siunitx}
\usepackage{stfloats}
\usepackage{graphicx}
\usepackage{subfigure}
\usepackage{hyperref}
\usepackage{enumerate}
\usepackage{graphicx}
\usepackage{caption}
\usepackage{subcaption}

\newtheorem{theorem}{Theorem}[section]

\newtheorem{lemma}[theorem]{Lemma}

\newtheorem{definition}[theorem]{Definition}

\newenvironment{fminipage}%
{\begin{Sbox}\begin{minipage}}%
		{\end{minipage}\end{Sbox}\fbox{\TheSbox}}

\def\calG{\mathcal{G}}

\def\calF{\mathcal{F}}
\def\calR{\mathcal{R}}




\newcommand{\removelatexerror}{\let\@latex@error\@gobble}

\newcommand\LL{\bm{\mathit{L}}}

\newcommand\ee{\boldsymbol{\mathit{e}}}

\renewcommand\AA{\boldsymbol{\mathit{A}}}

\newcommand\DD{\boldsymbol{\mathit{D}}}

\newcommand\II{\boldsymbol{\mathit{I}}}

\DontPrintSemicolon
\SetKw{KwAnd}{and}
\SetFuncSty{textsc}
\SetKwInOut{Input}{Input\ \ \ \ }

\SetKwInOut{Output}{Output}

\usepackage{tabularx}
\usepackage{stfloats}

\DontPrintSemicolon
\SetKw{KwAnd}{and}
\SetFuncSty{textsc}
\SetKwInOut{Input}{Input\ \ \ \ }
\SetKwInOut{Output}{Output}

\usepackage{tabularx}
\usepackage{stfloats}

\copyrightyear{2024}
\acmYear{2024}
\setcopyright{acmlicensed}
\acmConference[KDD '24] {Proceedings of the 30th ACM SIGKDD Conference on Knowledge Discovery and Data Mining }{August 25--29, 2024}{Barcelona, Spain.}
\acmBooktitle{Proceedings of the 30th ACM SIGKDD Conference on Knowledge Discovery and Data Mining (KDD '24), August 25--29, 2024, Barcelona, Spain}
\acmDOI{10.1145/3637528.3671822}
\acmISBN{979-8-4007-0490-1/24/08}




\begin{document}
	\title{Fast Computation for the Forest Matrix of an Evolving Graph} 

	\author{Haoxin Sun}
	\affiliation{%
		\institution{Fudan University}
		\city{Shanghai}
		\country{China}}
	\email{21210240097@m.fudan.edu.cn}

 	\author{Xiaotian Zhou}
	\affiliation{%
		\institution{Fudan University}
		\city{Shanghai}
		\country{China}}
	\email{22110240080@m.fudan.edu.cn}
 
	\author{Zhongzhi Zhang\footnotemark}
	\affiliation{%
		\institution{Fudan University}
		\city{Shanghai}
		\country{China}}
	\email{zhangzz@fudan.edu.cn}

\begin{abstract}


The forest matrix plays a crucial role in network science, opinion dynamics, and machine learning, offering deep insights into the structure of and dynamics on networks. In this paper, we study the problem of querying entries of the forest matrix in evolving graphs, which more accurately represent the dynamic nature of real-world networks compared to static graphs.   To address the unique challenges posed by evolving graphs, we first introduce two approximation algorithms,  \textsc{SFQ} and \textsc{SFQPlus},  for static graphs.   \textsc{SFQ} employs a probabilistic interpretation of the forest matrix, while  \textsc{SFQPlus} incorporates a novel variance reduction technique and is theoretically proven to offer enhanced accuracy. Based on these two algorithms, we further devise two dynamic algorithms centered around efficiently maintaining a list of spanning converging forests. This approach ensures $O(1)$ runtime complexity for updates, including edge additions and deletions, as well as for querying matrix elements, and provides an unbiased estimation of forest matrix entries.  Finally, through extensive experiments on various real-world networks, we demonstrate the efficiency and effectiveness of our algorithms. Particularly, our algorithms are scalable to massive graphs with more than forty million nodes.

\end{abstract}
	

\begin{CCSXML}
<ccs2012>
<concept>
<concept_id>10003033.10003068</concept_id>
<concept_desc>Networks~Network algorithms</concept_desc>
<concept_significance>500</concept_significance>
</concept>
<concept>
<concept_id>10003752.10010061</concept_id>
<concept_desc>Theory of computation~Randomness, geometry and discrete structures</concept_desc>
<concept_significance>500</concept_significance>
</concept>
<concept>
<concept_id>10002951.10003227.10003351</concept_id>
<concept_desc>Information systems~Data mining</concept_desc>
<concept_significance>500</concept_significance>
</concept>
</ccs2012>
\end{CCSXML}

\ccsdesc[500]{Networks~Network algorithms}
\ccsdesc[500]{Theory of computation~Randomness, geometry and discrete structures}
\ccsdesc[500]{Information systems~Data mining}

%
%
	
\keywords{Forest matrix, evolving graph, Wilson's algorithm, spanning converging forest,    variance reduction}

	

\maketitle
\renewcommand{\thefootnote}{*}
\renewcommand{\thefootnote}{*}
\footnotetext[1]{Corresponding author. Haoxin Sun, Xiaotian Zhou,  and Zhongzhi Zhang are with  Shanghai Key Laboratory of Intelligent Information Processing, School of Computer Science, Fudan University, Shanghai 200433, China. } 

\section{Introduction}

 As a fundamental representation of a graph, the Laplacian matrix  $\LL$ encapsulates  a wealth of structural and dynamical information of the graph~\cite{Me94}. The forest matrix, denoted as $\mathbf{\Omega} = (\II + \LL)^{-1}$, also plays a pivotal role in the field of network science. This matrix, along with its variants, is recognized as a foundational matrix in numerous applications across various domains, such as Markov processes~\cite{AvLuGaAl18,AvCaGaMe18}, opinion dynamics~\cite{GiTeTs13, XuBaZh21,SuZh23}, graph signal processing~\cite{PiAmBaTr21,PiAmBaTr20}, game theory~\cite{BaCaZe06,GaGoGo20}, and multi-label classification~\cite{DeZhLiCh17}. The entries of the forest matrix notably reflect the structural properties of the graph due to their close relationship with the spanning rooted forests within the graph~\cite{ChSh06,ChSh97}.   In particular, the diagonal entries of $\mathbf{\Omega}$ are associated with the concept of forest closeness centrality~\cite{JiBaZh19, GrAnPrMe21} in graph mining and determinantal point processes~\cite{KuTa12} in machine learning, and find applications in electrical interpretations for multi-agent and network-based challenges~\cite{RoFrFa17}. The off-diagonal elements, $\omega_{ij}$, serve as a measure of the proximity of  node $i$ and node $j$, with a lower $\omega_{ij}$ indicating a greater ``distance'' between the two nodes~\cite{ChSh97}. These elements are also instrumental in calculating the personalized PageRank centrality between two nodes~\cite{JeWi03,WaYaXiWeYa17}.

Due to the broad applications of the forest matrix, querying its elements is of significant importance. In order to query the entries of $\mathbf{\Omega}$ for a graph with $n$ nodes, a direct inversion of the matrix $\II+\LL$   costs $O(n^3)$ operations and $O(n^2)$ memories and thus is prohibitive for relatively large graphs. In previous work, some fast Laplacian solver~\cite{CoKyMiPaJaPeRaXu14} based algorithms were proposed to compute the diagonal of the forest matrix~\cite{JiBaZh19, GrAnPrMe21}. Besides, some forest sampling algorithms were proposed to compute the trace of the forest matrix~\cite{BaTrGaAvPO19,PiAmBaTr22trace} and the column sum of the forest matrix~\cite{SuZh23}. 

A majority of existing algorithms are  designed for static graphs, despite the fact that many real-life networks typically evolve dynamically.  To query the elements of the forest matrix in evolving graphs, we need to repeatedly run these algorithms as the graph structure changes, which is very inefficient. Recognizing this limitation,  there is a pressing need for a dynamic approach to querying elements of the forest matrix in evolving graphs. Such a dynamic solution should be able to be updated  easily whenever edges are added or removed from the network, offering query results significantly faster than recalculating from the beginning. Furthermore, it is equally crucial to ensure that  the solution has guaranteed accuracy.

 
In this paper, we delve deep into the problem of efficiently computing  the entries of the forest matrix in an evolving  digraph, in order to overcome the challenges and limitations of existing algorithms. The main contributions of this work are summarized as follows:

(i) For static graphs, we introduce an algorithm \textsc{SFQ} to query forest matrix entries through a probabilistic interpretation. To enhance accuracy and reduce variance, we develop innovative variance-reduction techniques and present an algorithm \textsc{SFQPlus}, alongside a theoretical guarantee for its performance.

(ii) In the context of evolving graphs, we focus on two edge operations: edge insertions and deletions, by devising an update strategy to maintain a list of spanning converging forests. We demonstrate that our algorithm provides an unbiased estimate of the forest matrix entries and maintains an $O(1)$ runtime complexity for both querying and updating.

(iii) Comprehensive experiments on diverse real-world undirected and directed networks show that \textsc{SFQPlus} consistently delivers superior estimation accuracy compared to \textsc{SFQ}, which returns results close to the ground truth. Furthermore, our algorithms are efficient and scalable, even on massive graphs with more than forty million nodes.

\section{Related Work}
   In this section, we briefly review the existing work related to ours.

The investigation into forest matrix entries has garnered significant attention in recent years, leading to the development of two main categories of algorithms for addressing related problems: solver-based algorithms and sampling-based algorithms.  Solver-based algorithms primarily operate on undirected graphs, leveraging the fast Laplacian solver~\cite{CoKyMiPaJaPeRaXu14}. In works such as~\cite{JiBaZh19,GrAnPrMe21}, the authors utilized the fast Laplacian solver for fast calculation of the diagonal of forest matrix. In~\cite{XuBaZh21}, the authors combined the Johnson-Lindenstrauss lemma \cite{JoLi84,Ac03} with the fast Laplacian solver to compute relevant quantities related to the forest matrix. Another application in~\cite{ZhZh22} utilizes a similar method for addressing optimization issues in social dynamics, fundamentally relying on the forest matrix.

In addition to solver-based algorithms, many algorithms are sampling-based, inspired by the matrix-forest theorem that establishes a link between spanning forests and the forest matrix~\cite{ChSh06,ChSh97}.  Wilson's algorithm plays a pivotal role in these sampling-based algorithms. Wilson's algorithm and its variants have found applications across various domains, such as computing the trace of the forest matrix~\cite{BaTrGaAvPO19,PiAmBaTr22trace}, estimating the diagonal of forest matrix~\cite{SuZh24}, computing the column sum of the forest matrix to solve optimization problems in opinion dynamics~\cite{SuZh23}, and solving linear systems in graph signal processing related to the forest matrix~\cite{PiAmBaTr21,PiAmBaTr20}.

For many realistic  large graphs like social networks and the Internet, their structures often change over time, posing challenges in maintaining up-to-date information. For example, as the graph's structure evolves, it becomes necessary to repeatedly run previously mentioned algorithms, whether solver-based or sampling-based, to obtain newly queried forest matrix data. Researchers have developed many special tools called dynamic graph algorithms, which help solve problems faster. These tools have been used for various problems that involve edge sparsifiers~\cite{ GoHeTh18, HoLiTh01, KaKiMo13, Th07}, as well important variants of edge sparsifiers themselves, including minimum spanning trees~\cite{ HoRoWu15, NaSa17, NaSaWu17, Wu17}, spanners~\cite{ BaKhSa12}, spectral sparsifiers~\cite{ AbDuKoKrPe16, DuGaGoPe19, BrGaJaLeLiPeSi22}, and low-stretch spanning trees~\cite{ FoGo19}. However, most of these advancements have been more theoretical than practical and may not be suitable for the forest matrix query problem.

Our work takes a step further by applying Wilson's algorithm and providing novel techniques to reduce variance, for quickly updating and querying any entry in the forest matrix in a graph as it changes.

\section{Preliminaries}

 In this section, we introduce some useful notations and tools for the convenience of description and analysis.


\subsection{ Graph and  Laplacian Matrix}
Let $\calG=(V,E)$ denote  an unweighted simple directed graph (digraph), which consists of $n=|V|$ nodes (vertices) and $m=|E|$ directed edges (arcs). Here, \(V=\{v_1,v_2,\ldots,v_n\}\) denotes the set of nodes, and \(E\) represents the set of directed edges.  An directed edge \( (v_i,v_j) \in E \) indicates an edge pointing from node \(v_i\) to node \(v_j\). In what follows, $v_i$ and $i$ are used interchangeably to represent node $v_i$ if incurring no confusion.    


The structure of digraph $\calG=(V,E)$ is captured by its adjacency matrix $\AA=(a_{ij})_{n\times n}$, where $a_{ij} = 1$ if there is a directed edge from node $v_i$ to node $v_j$   and $a_{ij} = 0$ otherwise. The in-degree \(d^-_i\) and out-degree \(d^+_i\) of any node \(i\) are defined by \(d^-_i=\sum_{j=1}^n a_{ji}\) and \(d^+_i=\sum_{j=1}^n a_{ij}\), respectively. In the sequel, we use $d_i$ to represent the out-degree $d_i^+$. The  diagonal  degree matrix representing the out-degrees of digraph $\calG$ is ${\DD} = {\rm diag}(d_1, d_2, \ldots, d_n)$, and the Laplacian matrix is ${\LL}={\DD}-{\AA}$. For any given node $i$, $N^+_i$ and $N^-_i$ denote the  sets  of its out-neighbors and in-neighbors, meaning $N^+_i =\{ j: (i,j)\in E\}$ and $N^-_i =\{ j: (j,i)\in E\}$, respectively. Let $\II$ be the $n$-dimensional identity matrix, and $\ee_i$ be the  $i$-th standard basis column vector, with $i$-th element being 1 and other elements being 0.
 
A path $P$ from node $v_1$ to $v_j$ is a sequence of alternating nodes and arcs $v_1$,$(v_1,v_2)$,$v_2$,$\cdots$, $v_{j-1},(v_{j-1},v_j)$, $v_j$ where each node is unique and every arc connects $v_i$ directly to $v_{i+1}$. A loop  is a path plus an arc from the ending node to the starting node. A digraph is   (strongly) connected if there exists a path from one node $v_x$ to another node $v_y$, and vice versa. A digraph is called weakly connected if it is connected when one replaces any directed edge $(i,j)$ with two directed edges $(i,j)$ and $(j,i)$ in opposite directions. A tree is a weakly connected graph with no loops, and an isolated node is considered as a tree. A forest is a particular graph that is a disjoint union of trees.

\subsection{Spanning Converging Forests and Forest Matrix}

In a directed graph $\calG=(V,E)$, a spanning subgraph contains all the nodes from \(V\) and  a subset of edges from \(E\). A rooted converging tree is a weakly connected digraph, where one node, called the root node, has an out-degree of 0, and all other nodes have an out-degree of 1.  An isolated node is considered as a converging tree with the root being itself.   A spanning converging forest of digraph $\calG$ that includes all nodes in \(V\)  and whose weakly connected components are rooted converging trees. Such a forest aligns with the concept of  in-forest as described in~\cite{AgCh01,ChAg02}.

The forest matrix~\cite{ChSh97,ChSh98} is defined as $\mathbf{\Omega}=\left(\II+\LL\right)^{-1}=(\omega_{ij})_{n \times n}$. In the context of digraphs, the forest matrix $\Omega$ is row stochastic, with all its components in the interval $[0,1]$. Moreover,  for each column, the diagonal elements surpass the other elements, that is  $ 0\leq\omega_{ji}< \omega_{ii}\leq 1$ for any pair of different nodes $ i$ and $j $, and the diagonal element $\omega_{ii}$ of matrix $\mathbf{\Omega}$ satisfies $\frac{1}{1+d_i}\leq \omega_{ii} \leq \frac{2}{2+d_i}$~\cite{SuZh23}.

For an unweighted digraph $\calG=(V,E)$, let $\calF $ denote the set of all  its spanning converging forests. For a given spanning converging forest $\phi\in\calF$,  define the root set $\mathcal{R}(\phi )$ of $\phi$ as the collection of roots from all converging trees that constitute $ \phi$, that is, $\mathcal{R}(\phi ) = \{i:(i,j) \notin \phi, \forall j \in V \}$. Since each node $i$ in $\phi$  is part of a specific converging tree, we define a function $r_{\phi}(i): V \rightarrow \calR(\phi) $ mapping node $i$ to the root of its associated converging tree. Thus, if $r_{\phi}(i) = j$, it implies that $j$ is in $\calR(\phi)$, and both nodes $i$ and $j$ are part of the same converging tree  in \(\phi\). Define $ \calF_{ij} $ as the set of spanning converging forests in which nodes $i$ and $j$ are within the same converging tree, rooted at node $j$. Formally, $\calF_{ij} = \{\phi: r_{\phi}(i) = j, \phi \in \calF\}$.  It follows that $\calF_{ii} = \{\phi: i\in \calR(\phi), \phi \in \calF\}$.   For two nodes $i$ and $j$ and a  spanning converging forest $\phi$, define $\mathbb{I}_{\{r_{\phi}(i)=j\}}$ as an indicator function, which is 1 if the input statement is true and 0 otherwise. For example, if $r_{\phi}(i)=j$, $\mathbb{I}_{\{r_{\phi}(i)=j\}}=1$, and $ \mathbb{I}_{\{r_{\phi}(i)=j\}}=0$ otherwise.

\section{Fast Query of Entries in Forest Matrix for Digraphs}
In this section, we propose sampling-based algorithms and novel variance reduction techniques, designed to enable fast querying of entries for the forest matrix.
 
\subsection{ Extension of Wilson's Algorithm }
As mentioned above, the entries of the forest matrix are closely related to the spanning converging forests. In this subsection, we briefly introduce the method for uniformly generating spanning converging forest in graphs, utilizing  an extension of Wilson's algorithm.  

Wilson proposed an algorithm  based on a loop-erased random walk to get a  spanning tree rooted at a given node~\cite{Wi96}.  The loop-erasure technique, pivotal to this algorithm,  is a process derived from the random walk  by performing an erasure operation on its loops in chronological order~\cite{LaFr79,La80}. For a digraph $\calG= (V,E)$,  we can also apply an extension of Wilson's algorithm to get a spanning converging forest $\phi \in \calF$, by using the method similar to that in~\cite{AvLuGaAl18,PiAmBaTr21,SuZh23}, which includes the following  three steps: (i) Construct an augmented digraph $\calG'$  of $\calG$, which is obtained from $\calG$ by adding one new node $\Delta$ to $\calG$, and adding a new edge $(i,\Delta)$  for each node $i$ in $\calG$. (ii) Apply Wilson's algorithm to $\calG'$, designating $\Delta$ as the root, to produce a rooted spanning tree. (iii) Remove node $\Delta$ and its connected edges, and define the root set $\calR$ as the nodes with an out-degree of 0, thereby obtaining a spanning converging forest in $\calG$.

Since Wilson's algorithm returns a uniform rooted spanning tree~\cite{Wi96}, the spanning converging forest obtained using the above steps is also uniformly selected from $\calF$. According to~\cite{SuZh23}, the expected time complexity for generating a uniform spanning converging forest in a digraph $\calG = (V,E)$ is $O(n)$, making this method efficient and practical for large-scale graphs.

\subsection{Probabilistic Interpretation and Unbiased Estimators of Entries in Forest Matrix}
In this subsection, we present a probabilistic interpretation of the entries in the forest matrix and propose unbiased estimators for these entries.

Using the approach in~\cite{Ch82,ChSh06,ChSh97},  for any pair of nodes $i,j\in V$, the entry $\omega_{ij}$ of the forest matrix $\mathbf{\Omega}$ can be expressed as $\omega_{ij}= |\calF_{ij}|/|\calF|$.  This suggests a probabilistic interpretation of the entry $\omega_{ij}$ of the  forest matrix which represents the probability of the root of node $i$ being node $j$   in a uniformly sampled spanning converging forest $\phi \in \calF$. For a spanning converging forest $\phi\in \calF$,  We define an  estimator $\widehat{\omega}_{ij}(\phi)$ for $\omega_{ij}$ as $\widehat{\omega}_{ij}(\phi) = \mathbb{I}_{\{r_{\phi}(i)=j\}}$.   The estimator $\widehat{\omega}_{ij}(\phi)$ takes the value $1$ if the root of $i$ is $j$, and $0$ otherwise. This estimator is unbiased if $\phi$ is uniformly selected from $\calF$, satisfying $\mathbb{E}(\widehat{\omega}_{ij}(\phi)) = \mathbb{P}(r_{\phi}(i)=j) = \frac{|\calF_{ij}|}{|\calF|} = \omega_{ij}$. Moreover, the variance of $\widehat{\omega}_{ij}$ is ${\rm Var}(\widehat{\omega}_{ij}) = \omega_{ij} - \omega_{ij}^2$.

\begin{algorithm}[htbp!]
	\caption{$\textsc{SFQ}(\calG,\calF_0,l,i,j )$}
	\label{alg-rf}
	\Input{  A digraph $\calG$,  a list of $l$ uniformly sampled spanning converging forest    $\calF_0$, a pair of querying id $i,j$	}
	\Output{$\widehat{\omega}_{ij}$ : an estimator of $\omega_{ij}$}
 \textbf{Initialize} :
	$\widehat{\omega}_{ij} \leftarrow 0 $ \\
 \For{$\phi$ in $\calF_0$}{
 $k \leftarrow r_{\phi}(i)$\\
                 \If{$k = j$}{
                $\widehat{\omega}_{ij}\leftarrow$ $\widehat{\omega}_{ij} + 1/l$\\ 
                }
 
}
\textbf{return} $\widehat{\omega}_{ij}$ \;
\end{algorithm}

As previously established, we employ the extension of Wilson's algorithm to generate $l$ spanning converging forests $\phi_{1}, \cdots, \phi_l$. Then we can approximate $\omega_{ij}$ using the mean value $\frac{1}{l}\sum_{k=1}^l\widehat{\omega}_{ij}(\phi_k)$. We detailed this in  Algorithm~\ref{alg-rf}, named as Spanning Forest Query ($\textsc{SFQ}$). The time complexity of Algorithm~\ref{alg-rf} is $O(l)$, where $l$ is the number of the pre-sampled spanning converging forests.

\subsection{Enhanced Estimators with Reduced Variance}
In the preceding subsection, we defined an unbiased estimator $\widehat{\omega}_{ij}(\phi) $ for $\omega_{ij}$.  In this subsection, we introduce enhanced estimators that not only maintain    the unbiased property, but also achieve reduced variance, thus providing more accurate estimations.

We begin with the cases  $i \neq j$. For a spanning converging forest $\phi \in \calF$, the initial estimator $\widehat{\omega}_{ij}(\phi)$ assigns a value of 1 if a path exists from $i$ to $j$ in $\phi$, and 0 otherwise. However, this approach overlooks the case where a node $k \in N^-_j$ (indicating an edge $(k,j) \in E$) is the root for node $i$ in the forest $\phi$. This situation suggests that node $i$ might be rooted at node $j$ in other forests due to the possible existence of a path  from $i$ to $j$, which is a concatenation of a path from $i$ to $k$ and an edge $(k,j)$. Our new approach aims to incorporate these overlooked instances by aggregating information from forests where node $i$ is rooted in the in-neighbors of node $j$.   

Specifically, for nodes $k \in N^-(j)$,  $\omega_{ik}$  denotes the probability of node $i$  rooted at $k$. When performing the loop-erased random walk in the extension of Wilson's algorithm, if the walk is currently at node $i$, it has a probability of $\frac{1}{1+d_i}$   moving to a random out-neighbor or becoming a root (moving to the extended node $\Delta$ in the augmented graph $\calG'$). Let's consider a loop-erased random walk that starts at node $i$ and ends at node $k$, creating a path $P$ from $i$ to $k$ and then terminating.  Now, imagine a walking path $P'$  slightly different from path $P$, where  the walker does not  ending at $k$, but continues to jump to node $j$ and then terminates.  The probability of  path $P'$  occurring is \(\frac{1}{1+d_j}\) times that of $P$. Since any path  from $i$ to $j$ must pass through a node $k \in N^-(j)$ before reaching $j$, we define a new estimator $\widetilde{\omega}_{ij}(\phi)$ for different nodes \(i, j\), and spanning converging forest \(\phi \in \calF\) as $\widetilde{\omega}_{ij}(\phi) = \frac{1}{1+d_j}\sum_{k \in N^-(j)}\widehat{\omega}_{ik}(\phi).$

The following lemma  shows that $\widetilde{\omega}_{ij}$ is an unbiased estimator of $\omega_{ij}$ and has a reduced variance compared to $\widehat{\omega}_{ij}$:

\begin{lemma}\label{le-tildeomega}
       For two different nodes $i$ and $j$ in graph $\calG$  and a uniformly chosen spanning converging forest $\phi \in \calF$, $\widetilde{\omega}_{ij}(\phi) = \frac{1}{1+d_j}\sum_{k \in N^-(j)}\widehat{\omega}_{ik}(\phi)$ is an unbiased estimator of $\omega_{ij}$.   The variance of this estimator is given by ${\rm Var}(\widetilde{\omega}_{ij}) = \frac{\omega_{ij}}{1+d_j} - \omega_{ij}^2$, which is always less than or equal to the variance of the estimator $\widehat{\omega}_{ij}$. 
\end{lemma}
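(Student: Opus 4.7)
The plan is to establish unbiasedness via an entrywise identity coming from the defining relation $\mathbf{\Omega}(\II+\LL) = \II$, and to obtain the variance formula by exploiting the fact that in a spanning converging forest every node has a unique root, so the indicators summed in $\widetilde{\omega}_{ij}$ are mutually exclusive.

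\textbf{Step 1 (Unbiasedness).} I would expand $\mathbf{\Omega}(\II+\LL) = \II$ at the entry $(i,j)$. Using $L_{kj} = d_k\delta_{kj} - a_{kj}$, the product $(\mathbf{\Omega}\LL)_{ij} = \sum_k \omega_{ik}L_{kj}$ simplifies to $d_j\omega_{ij} - \sum_{k\in N^-(j)}\omega_{ik}$. Combining with $\omega_{ij} + (\mathbf{\Omega}\LL)_{ij} = \delta_{ij}$ yields, for $i\neq j$, the key identity
\[
(1+d_j)\,\omega_{ij} \;=\; \sum_{k\in N^-(j)}\omega_{ik}.
\]
Since $\mathbb{E}[\widehat{\omega}_{ik}(\phi)] = \omega_{ik}$ for a uniformly chosen $\phi\in\calF$, linearity of expectation applied to the definition of $\widetilde{\omega}_{ij}$ immediately gives $\mathbb{E}[\widetilde{\omega}_{ij}(\phi)] = \frac{1}{1+d_j}\sum_{k\in N^-(j)}\omega_{ik} = \omega_{ij}$.

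\textbf{Step 2 (Variance).} Every node in a spanning converging forest belongs to a unique converging tree, so $r_\phi(i)$ takes exactly one value, and thus among $\{\widehat{\omega}_{ik}(\phi)\}_{k\in N^-(j)}$ at most one indicator is $1$. Hence $\widetilde{\omega}_{ij}(\phi)$ is a two-point random variable taking the value $\frac{1}{1+d_j}$ with probability $\sum_{k\in N^-(j)}\omega_{ik} = (1+d_j)\omega_{ij}$ and the value $0$ otherwise. A direct computation then gives $\mathbb{E}[\widetilde{\omega}_{ij}^2] = \omega_{ij}/(1+d_j)$, and therefore
\[
{\rm Var}(\widetilde{\omega}_{ij}) \;=\; \frac{\omega_{ij}}{1+d_j} - \omega_{ij}^2.
\]
Comparing with ${\rm Var}(\widehat{\omega}_{ij}) = \omega_{ij} - \omega_{ij}^2$, which holds because $\widehat{\omega}_{ij}$ is Bernoulli with parameter $\omega_{ij}$, and using $1+d_j \geq 1$, the new variance is never larger, with equality only in the degenerate case $d_j = 0$.

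\textbf{Main obstacle.} The most delicate point is justifying the algebraic identity $(1+d_j)\omega_{ij} = \sum_{k\in N^-(j)}\omega_{ik}$. The probabilistic heuristic that the paper sketches, namely extending a loop-erased walk from $k\in N^-(j)$ by one step to $j$ and then to the augmented root $\Delta$, is intuitive but requires verifying that loop erasure in $\calG'$ genuinely produces an in-forest with $r_\phi(i)=j$ and the claimed probability weight $\frac{1}{1+d_j}$; I would sidestep this subtlety entirely by working from $\mathbf{\Omega}(\II+\LL)=\II$ as above, where the identity falls out of a one-line entrywise expansion. Once that identity is in hand, both unbiasedness and the variance comparison reduce to elementary manipulations.
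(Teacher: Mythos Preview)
Your proposal is correct and follows essentially the same approach as the paper: both derive unbiasedness from the entrywise identity $(1+d_j)\omega_{ij} = \sum_{k\in N^-(j)}\omega_{ik}$ obtained from $\mathbf{\Omega}(\II+\LL)=\II$, and both compute the variance by exploiting that the indicators $\widehat{\omega}_{ik}$ are mutually exclusive (the paper phrases this as $\mathbb{E}[\widehat{\omega}_{ij}\widehat{\omega}_{ik}]=0$ for $j\neq k$, you phrase it as $\widetilde{\omega}_{ij}$ being a two-point random variable). Your ``main obstacle'' worry is moot, since the paper too bypasses the loop-erased-walk heuristic and uses the algebraic route you propose.
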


\begin{proof}
   Using the relationship  $\mathbf{\Omega}(\II+\LL) = \II$, it follows that $\ee_i^\top\mathbf{\Omega}(\II+\LL)\ee_j = 0$. This implies $(1+d_j)\omega_{ij} -\sum_{k\in N^-_j}\omega_{ik} = 0$, leading to $ {\omega}_{ij}  = \frac{1}{1+d_j}\sum_{k \in N^-_j} {\omega}_{ik}. $ Since $\mathbb{E}(\widehat{\omega}_{ik}) ={\omega}_{ik} $, we have $ \mathbb{E}(\widetilde{\omega}_{ij}) =\frac{1}{1+d_j}\sum_{k \in N^-(j)}\mathbb{E}(\widehat{\omega}_{ik}) = \frac{1}{1+d_j} \sum_{k \in N^-_j} {\omega}_{ik} = {\omega}_{ij}, $  which shows that $\widetilde{\omega}_{ij}$ is an unbiased estimator of $\omega_{ij}$.

    Next, we calculate the variance of $\widetilde{\omega}_{ij}$. Considering   $\mathbb{E}(\widehat{\omega}_{ik}^2) = {\omega}_{ik}$ and $\mathbb{E}(\widehat{\omega}_{ij}\widehat{\omega}_{ik}) =0$  for $j\neq k$, we have $ {\rm Var}(\widetilde{\omega}_{ij})  =\mathbb{E}(\widetilde{\omega}_{ij}^2) -  (\mathbb{E}(\widetilde{\omega}_{ij}))^2 =\mathbb{E}(\frac{1}{(1+d_j)^2} (\sum_{k \in N^-(j)}\widehat{\omega}_{ik})^2) -  {\omega}_{ij}^2= \frac{\sum_{k \in N^-(j)}\widehat{\omega}_{ik}}{(1+d_j)^2}  -   {\omega}_{ij}^2  = \frac{\omega_{ij}}{1+d_j} - \omega_{ij}^2$,  which finishes the proof. 
\end{proof}

Lemma~\ref{le-tildeomega} demonstrates that $\widetilde{\omega}_{ij}$ is an unbiased estimator with a lower variance compared to $\widehat{\omega}_{ij}$. The reduction in variance is attributed to $\widetilde{\omega}_{ij}$ accounting for instances where node $i$ is rooted at the in-neighbors of node $j$, a scenario that $\widehat{\omega}_{ij}$ fails to consider. Specifically, if $\phi \in \calF_{ik}$ with $k \in N^-_j$, then $\widetilde{\omega}_{ij}(\phi)$ equals $\frac{1}{1+d_i}$, while $\widehat{\omega}_{ij}(\phi)$ is 0. Conversely, for $\phi \in \calF_{ij}$, $\widehat{\omega}_{ij}(\phi)$ is 1, but $\widetilde{\omega}_{ij}(\phi)$ is 0. This indicates that $\widetilde{\omega}_{ij}$ partially disregards instances where $i$ is directly rooted at $j$.

To address this issue, we introduce an estimator $\widebar{\omega}_{ij}(\phi,\alpha)$, which is a linear combination of $\widetilde{\omega}_{ij}$ and $\widehat{\omega}_{ij}$, defined as $\widebar{\omega}_{ij}(\phi,\alpha) = \alpha \widehat{\omega}_{ij}(\phi) + (1-\alpha)\widetilde{\omega}_{ij}(\phi)$. The parameter $\alpha$ is chosen to minimize the variance of the estimator, as explained in the following lemma:

\begin{lemma}\label{le-baromega}
    For two distinct nodes   $i$ and $j$ in $V$, a parameter $\alpha \in \calR$, and a uniformly chosen spanning converging forest $\phi \in \calF$, the estimator $\widebar{\omega}_{ij}(\phi,\alpha) = \alpha \widehat{\omega}_{ij}(\phi) + (1-\alpha)\widetilde{\omega}_{ij}(\phi)$ is an unbiased estimator for  $\omega_{ij}$. The variance of  estimator $\widebar{\omega}_{ij}(\phi,\alpha)$ is minimized when $\alpha = \frac{1}{2+d_j}$. Defining $\widebar{\omega}_{ij}(\phi)$ as $\widebar{\omega}_{ij}(\phi, \frac{1}{2+d_j})=  \frac{1}{2+d_j}(\widehat{\omega}_{ij}(\phi)+\sum_{k\in N^-_j}\widehat{\omega}_{ik}(\phi))$, the variance of $\widebar{\omega}_{ij}(\phi)$ is given by ${\rm Var}(\widetilde{\omega}_{ij}) = \frac{\omega_{ij}}{2+d_j} - \omega_{ij}^2$, which is smaller than that of both $\widehat{\omega}_{ij}(\phi)$ and $\widetilde{\omega}_{ij}(\phi)$.  Moreover, the non-negativity of variance implies $\omega_{ij}\leq \frac{1}{2+d_j}$.
\end{lemma}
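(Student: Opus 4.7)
The plan is to prove the four claims in the order stated: unbiasedness, the optimal value of $\alpha$, the explicit form and variance of $\widebar{\omega}_{ij}(\phi)$, and the bound $\omega_{ij} \leq 1/(2+d_j)$. Unbiasedness is essentially free: by Lemma~\ref{le-tildeomega} we have $\mathbb{E}(\widetilde{\omega}_{ij}(\phi)) = \omega_{ij}$, and $\mathbb{E}(\widehat{\omega}_{ij}(\phi)) = \omega_{ij}$ from the probabilistic interpretation developed earlier, so linearity yields $\mathbb{E}(\widebar{\omega}_{ij}(\phi,\alpha)) = \omega_{ij}$ for every real $\alpha$.

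The heart of the argument is computing ${\rm Var}(\widebar{\omega}_{ij}(\phi,\alpha))$, for which I need ${\rm Cov}(\widehat{\omega}_{ij},\widetilde{\omega}_{ij})$. The critical observation is that the two estimators have disjoint support on $\calF$: $\widehat{\omega}_{ij}(\phi)$ is nonzero exactly when $r_{\phi}(i)=j$, whereas $\widetilde{\omega}_{ij}(\phi)$ is nonzero only when $r_{\phi}(i) \in N^-_j$. Because $\calG$ is a simple digraph without self-loops, $j \notin N^-_j$, and the root function $r_{\phi}(\cdot)$ is single-valued, so $\widehat{\omega}_{ij}(\phi)\widetilde{\omega}_{ij}(\phi) = 0$ for every $\phi$. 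Taking expectations gives $\mathbb{E}(\widehat{\omega}_{ij}\widetilde{\omega}_{ij}) = 0$, hence ${\rm Cov}(\widehat{\omega}_{ij},\widetilde{\omega}_{ij}) = -\omega_{ij}^2$. This small structural observation is where the subtlety lies; once it is in hand, everything else is mechanical.

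Substituting the two variances from Lemma~\ref{le-tildeomega} together with this covariance into the standard expansion of ${\rm Var}(\alpha\widehat{\omega}_{ij} + (1-\alpha)\widetilde{\omega}_{ij})$, the $\omega_{ij}^2$ coefficients collapse via the identity $\alpha^2 + (1-\alpha)^2 + 2\alpha(1-\alpha) = 1$, leaving
\[
{\rm Var}(\widebar{\omega}_{ij}(\phi,\alpha)) \;=\; \omega_{ij}\left[\alpha^2 + \frac{(1-\alpha)^2}{1+d_j}\right] - \omega_{ij}^2.
\]
This is a convex quadratic in $\alpha$; differentiating and setting the derivative to zero immediately produces $\alpha^\star = 1/(2+d_j)$, and a direct substitution simplifies the bracket to $1/(2+d_j)$, giving the minimum variance $\omega_{ij}/(2+d_j) - \omega_{ij}^2$.

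Finally, the explicit formula for $\widebar{\omega}_{ij}(\phi)$ follows by plugging $\alpha^\star$ into the definition and using $\widetilde{\omega}_{ij} = \frac{1}{1+d_j}\sum_{k\in N^-_j}\widehat{\omega}_{ik}$: the $(1+d_j)$ factors cancel and produce $\frac{1}{2+d_j}\bigl[\widehat{\omega}_{ij}(\phi) + \sum_{k\in N^-_j}\widehat{\omega}_{ik}(\phi)\bigr]$. The variance comparisons with $\widehat{\omega}_{ij}$ and $\widetilde{\omega}_{ij}$ follow from $\frac{1}{2+d_j} < \frac{1}{1+d_j} \leq 1$, and the bound $\omega_{ij} \leq 1/(2+d_j)$ is obtained by rearranging $\omega_{ij}/(2+d_j) - \omega_{ij}^2 \geq 0$, which holds because a variance is non-negative.
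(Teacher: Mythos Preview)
Your proof is correct and follows essentially the same approach as the paper: both compute ${\rm Var}(\widebar{\omega}_{ij}(\phi,\alpha))$ as a quadratic in $\alpha$ and minimize. You are more explicit about the key structural fact $\widehat{\omega}_{ij}(\phi)\widetilde{\omega}_{ij}(\phi)=0$ (disjoint support because $j\notin N^-_j$ in a simple digraph), whereas the paper jumps directly to the completed-square form $\frac{\omega_{ij}(2+d_j)}{1+d_j}\bigl(\alpha-\tfrac{1}{2+d_j}\bigr)^2+\tfrac{\omega_{ij}}{2+d_j}-\omega_{ij}^2$, but the underlying computation is the same.
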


\begin{proof}
   Since both $\widehat{\omega}_{ij}$ and $\widetilde{\omega}_{ij}$ are unbiased estimators for $\omega_{ij}$, it follows that  $\mathbb{E}(\widebar{\omega}_{ij}(\phi,\alpha)) = \alpha\mathbb{E}( \widehat{\omega}_{ij}(\phi)) + (1-\alpha)\mathbb{E}(\widetilde{\omega}_{ij}(\phi)) = \omega_{ij}$, indicating the estimator $\widebar{\omega}_{ij}(\phi,\alpha)$ is also unbiased for $\omega_{ij}$. We proceed to calculate the variance of $\widebar{\omega}_{ij}(\phi,\alpha)$ as  \begin{equation*}
        \begin{aligned}
        &\quad {\rm Var}(\widebar{\omega}_{ij}(\phi,\alpha))  =\mathbb{E}(\widebar{\omega}_{ij}^2(\phi,\alpha)) -  (\mathbb{E}(\widebar{\omega}_{ij}(\phi,\alpha))^2\\ &= \frac{\omega_{ij}(2+d_j)}{1+d_j}\left(\alpha - \frac{1}{2+d_j}\right)^2 + \frac{\omega_{ij}}{2+d_j} - \omega_{ij}^2.
        \end{aligned}
    \end{equation*}

    The variance of the  estimator $\widebar{\omega}_{ij}(\phi,\alpha)$ is minimized when $\alpha = \frac{1}{2+d_j}$, resulting in $\frac{\omega_{ij}}{2+d_j} - \omega_{ij}^2$, which finishes the proof.
\end{proof}

Lemma~\ref{le-baromega} indicates that the newly formulated estimator $\widebar{\omega}_{ij}(\phi) = \frac{1}{2+d_j}(\widehat{\omega}_{ij}(\phi)+\sum_{k\in N^-_j}\widehat{\omega}_{ik}(\phi))$ has a lower variance than the previous two estimators by incorporating their respective information.  Having established the improved estimator $\widebar{\omega}_{ij}$ for the scenario   $i \neq j$, we next  focus on the situation that $i$ is equal to $j$.

For the case $i=j$, we consider a similar approach of aggregating information from  the in-neighbors of node $i$. This leads to a new estimator with reduced variance compared to $\widehat{\omega}_{ii}$.     We define the new estimator $\widebar{\omega}_{ii}(\phi)$ as  $\widebar{\omega}_{ii}(\phi) = \frac{1}{1+d_i}(1+\sum_{k\in N^-_i}\widehat{\omega}_{ik}(\phi))$. Lemma~\ref{le-baromgaii} demonstrates that $\widebar{\omega}_{ii}(\phi)$ is an unbiased estimator for $\omega_{ii}$ with reduced variance.
 
\begin{lemma}\label{le-baromgaii}
   For node $i\in V$ and a uniformly chosen spanning converging forest $\phi \in \calF $, $\widebar{\omega}_{ii}(\phi) = \frac{1}{1+d_i}(1+\sum_{k\in N^-_i}\widehat{\omega}_{ik}(\phi))$ is an unbiased estimator for $\omega_{ii}$. The variance of this estimator is  ${\rm Var}(\widebar{\omega}_{ii}) = \frac{3\omega_{ii}}{1+d_i} - \frac{2}{(1+d_i)^2} -\omega_{ii}^2$, which is always less than or equal to the variance of the estimator $\widehat{\omega}_{ii}$. 
\end{lemma}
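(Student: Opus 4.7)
The plan is to mirror the approach of Lemmas~\ref{le-tildeomega} and~\ref{le-baromega}, with the one new ingredient being the constant ``$+1$'' inside the estimator, which arises because node $i$ can be its own root. As before, the skeleton is (a) derive a recursive identity for $\omega_{ii}$ from $\mathbf{\Omega}(\II+\LL) = \II$, (b) invoke linearity of expectation for unbiasedness, (c) expand the square and use the indicator structure of $\widehat{\omega}_{ik}$ to compute the second moment, and (d) compare the resulting variance to ${\rm Var}(\widehat{\omega}_{ii}) = \omega_{ii} - \omega_{ii}^2$.

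For unbiasedness, I would read off the $(i,i)$ entry of $\mathbf{\Omega}(\II+\LL) = \II$, which yields $(1+d_i)\omega_{ii} - \sum_{k\in N^-_i}\omega_{ik} = 1$, or equivalently $\omega_{ii} = \frac{1}{1+d_i}\bigl(1 + \sum_{k\in N^-_i}\omega_{ik}\bigr)$. Combined with $\mathbb{E}(\widehat{\omega}_{ik}) = \omega_{ik}$, linearity of expectation immediately gives $\mathbb{E}(\widebar{\omega}_{ii}) = \omega_{ii}$. For the variance, I would expand $\bigl(1 + \sum_{k\in N^-_i}\widehat{\omega}_{ik}\bigr)^2 = 1 + 2\sum_{k}\widehat{\omega}_{ik} + \bigl(\sum_{k}\widehat{\omega}_{ik}\bigr)^2$, then exploit the two key facts that $\widehat{\omega}_{ik}^2 = \widehat{\omega}_{ik}$ (it is an indicator) and $\widehat{\omega}_{ij}\widehat{\omega}_{ik} = 0$ for $j\neq k$ (node $i$ has a unique root in any forest). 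This collapses the square of the sum back to $\sum_k \widehat{\omega}_{ik}$, giving $\mathbb{E}\bigl((1+\sum_k \widehat{\omega}_{ik})^2\bigr) = 1 + 3\sum_k \omega_{ik}$. Substituting $\sum_{k\in N^-_i}\omega_{ik} = (1+d_i)\omega_{ii} - 1$ from the unbiasedness step and dividing by $(1+d_i)^2$ yields $\mathbb{E}(\widebar{\omega}_{ii}^2) = \frac{3\omega_{ii}}{1+d_i} - \frac{2}{(1+d_i)^2}$, and subtracting $\omega_{ii}^2$ gives the claimed formula.

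The main obstacle is the final variance comparison, which is not completely automatic because the coefficient $\frac{3}{1+d_i} - 1$ changes sign with $d_i$. I would rewrite the difference as
\begin{equation*}
{\rm Var}(\widebar{\omega}_{ii}) - {\rm Var}(\widehat{\omega}_{ii}) = \frac{1}{(1+d_i)^2}\bigl[\omega_{ii}(1+d_i)(2 - d_i) - 2\bigr],
\end{equation*}
and then split on $d_i$: when $d_i \geq 2$, the factor $(2 - d_i) \leq 0$ makes the bracket at most $-2 < 0$; when $d_i \in \{0,1\}$, the trivial bound $\omega_{ii} \leq 1$ bounds the bracket by $(1+d_i)(2-d_i) - 2 = 0$. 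Either way the difference is nonpositive, completing the proof. This case analysis is the one slightly fiddly point, but it is entirely elementary once the identity above is in place.
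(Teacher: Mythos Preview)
Your proposal is correct and follows essentially the same approach as the paper: the unbiasedness step via the $(i,i)$ entry of $\mathbf{\Omega}(\II+\LL)=\II$ and the second-moment computation using $\widehat{\omega}_{ik}^2=\widehat{\omega}_{ik}$ and $\widehat{\omega}_{ij}\widehat{\omega}_{ik}=0$ are identical. The only cosmetic difference is in the final comparison, where the paper rewrites ${\rm Var}(\widehat{\omega}_{ii})-{\rm Var}(\widebar{\omega}_{ii})=\frac{2(1-\omega_{ii})}{(1+d_i)^2}+\frac{d_i(d_i-1)\omega_{ii}}{(1+d_i)^2}$, which is manifestly nonnegative for all integer $d_i\ge 0$ without your case split.
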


\begin{proof}
Since  $\mathbf{\Omega}(\II+\LL) = \II$, it follows that $\ee_i^\top\mathbf{\Omega}(\II+\LL)\ee_i = 1$. This implies   $(1+d_i)\omega_{ii} -\sum_{k\in N^-_i}\omega_{ik} = 1$, that is,  $ {\omega}_{ii}  = \frac{1}{1+d_i}(1 + \sum_{k \in N^-_i} {\omega}_{ik}). $ Since $\mathbb{E}(\widehat{\omega}_{ik}) ={\omega}_{ik} $, we have $ \mathbb{E}(\widebar{\omega}_{ii}) =\frac{1}{1+d_i}(1+\sum_{k \in N^-(i)}\mathbb{E}(\widehat{\omega}_{ik})) = \frac{1}{1+d_i}(1 + \sum_{k \in N^-_i} {\omega}_{ik})= {\omega}_{ii}, $  which shows that $\widetilde{\omega}_{ij}$ is an unbiased estimator for $\omega_{ij}$. The variance of $\widebar{\omega}_{ii}$ can be derived as follows: ${\rm Var}(\widebar{\omega}_{ii})= \mathbb{E}(\widebar{\omega}_{ii})^2 - (\mathbb{E}(\widebar{\omega}_{ii}))^2 = \frac{1}{(1+d_i)^2}\mathbb{E}(( 1+\sum_{k\in N^-_i}\widehat{\omega}_{ik})^2 ) - \omega_{ii}^2 
     = \frac{1}{(1+d_i)^2}\mathbb{E}(1+2\sum_{k\in N^-_i}\widehat{\omega}_{ik} + (\sum_{k\in N^-_i}\widehat{\omega}_{ik})^2 ) - \omega_{ii}^2 
     = \frac{1+3\sum_{k\in N^-_i}{\omega}_{ik}}{(1+d_i)^2}-\omega_{ii}^2  =  \frac{1+3((1+d_i)\omega_{ii}-1)}{(1+d_i)^2} - \omega_{ii}^2 =\frac{3\omega_{ii}}{1+d_i} - \frac{2}{(1+d_i)^2}  -\omega_{ii}^2.$ Then we get the following relation $ {\rm Var}\{\widehat{\omega}_{ii}\} - {\rm Var}\{\widebar{\omega}_{ii}\}    =\frac{2(1-\omega_{ii})}{(1+d_i)^2}+ \frac{d_{i}(d_i-1)\omega_{ii}}{(1+d_i)^2}\geq 0.$ This inequality  shows that the variance of $\widebar{\omega}_{ii}$  is no more than  the variance of the estimator $\widehat{\omega}_{ii}$, which completes the proof.
\end{proof}

In the above,  we have proposed enhanced estimators   $\widebar{\omega}_{ii}$ and $\widebar{\omega}_{ij}$ for the diagonal  $\omega_{ii}$ and non-diagonal entries $\omega_{ij}$. Suppose that we already have used the extension of Wilson's algorithm to generate $l$ spanning converging forests $\phi_{1}, \cdots, \phi_l$. Then we can approximate $ \omega_{ii}$   and $\omega_{ij}$ using  $\frac{1}{l}\sum_{k=1}^l\widebar{\omega}_{ii}(\phi_k)$ and $\frac{1}{l}\sum_{k=1}^l\widebar{\omega}_{ij}(\phi_k)$. We detailed this in  Algorithm~\ref{alg-rf+}, named as Spanning Forest Query Plus ($\textsc{SFQPlus}$).

\begin{algorithm}[htbp!]
	\caption{$\textsc{SFQPlus}(\calG,\calF_0,l,i,j )$}
	\label{alg-rf+}
	\Input{  A digraph $\calG = (V,E)$,  a list of $l$ uniformly sampled spanning converging forest    $\calF_0$, a pair of querying id $i,j$	}
	\Output{$\widebar{\omega}_{ij}$ : an estimator of $\omega_{ij}$}
\If{$i = j$}{
 \textbf{Initialize} :
	$\widebar{\omega}_{ij} \leftarrow \frac{1}{1+d_i} $ \\

}\Else{ \textbf{Initialize} :
	$\widebar{\omega}_{ij} \leftarrow  0 $ \\
}

 \For{$\phi$ in $\calF_0$}{
 $k \leftarrow r_{\phi}(i)$\\
\If{$k = j$ \& $i\neq j$ }{
    $\widebar{\omega}_{ij}\leftarrow$ $\widebar{\omega}_{ij} + \frac{1}{(2+d_j)l}$

}\ElseIf{ edge $(k,j) \in E$}{
\If{$i = j$}{

 $\widebar{\omega}_{ij}\leftarrow$ $\widebar{\omega}_{ij} + \frac{1}{(1+d_j)l}$

}\Else{$\widebar{\omega}_{ij}\leftarrow$ $\widebar{\omega}_{ij} + \frac{1}{(2+d_j)l}$}

}

}
\textbf{return} $\widebar{\omega}_{ij}$ \;
\end{algorithm}

The time complexity of Algorithm~\ref{alg-rf+} is $O(l)$, where $l$ is the number of the pre-sampled spanning converging forests.  As we increase the number of pre-sampled  forest $l $, we observe a corresponding decrease in the estimation error between $\widebar{\omega}_{ij}$ and the actual value $\omega_{ij}$. To quantify this relationship, we introduce Theorem~\ref{th-rf}, which specifies the necessary size of $l$ to achieve a necessary  error guarantee with a high probability. Before giving Theorem~\ref{th-rf}, we introduce the following lemma. 
 
\begin{lemma}(Chernoff bound~\cite{ChLu06})\label{le-chernoff}
	Let $ x_i(1\leq i\leq l) $ be independent random variables satisfying $ |x_i- \mathbb{E}\{x_i\}|\leq M$ for all $ 1\leq i \leq l $. Let $ x = \frac{1}{l} \sum_{i=1}^l x_i $. Then we have
	\begin{equation}\label{key}
		\mathbb{P}(|x-\mathbb{E}\{x\}| \leq \epsilon )\geq 1-2\exp{\left(-\frac{l\epsilon^2}{2({\rm Var}\{x\}l+M\epsilon/3)}\right)}.
	\end{equation}
\end{lemma}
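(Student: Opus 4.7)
The plan is to prove this Bernstein-type concentration inequality by the classical Chernoff moment-generating-function (MGF) method. First, I would center the variables by setting $y_i := x_i - \mathbb{E}\{x_i\}$, so that $\mathbb{E}\{y_i\}=0$, $|y_i|\leq M$, and by independence $\sum_{i=1}^{l}{\rm Var}\{y_i\} = \sum_{i=1}^{l}{\rm Var}\{x_i\} = l^{2}\,{\rm Var}\{x\}$. Writing $S = \sum_{i=1}^{l} y_i$, the event $\{|x - \mathbb{E}\{x\}| \geq \epsilon\}$ coincides with $\{|S| \geq l\epsilon\}$, so it suffices to bound the tails of $S$.

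Next, for an auxiliary parameter $t \in (0,\,3/M)$, I would upper-bound the right tail via Markov's inequality applied to $e^{tS}$: by independence, $\mathbb{P}(S \geq l\epsilon) \leq e^{-tl\epsilon}\,\prod_{i=1}^{l}\mathbb{E}\{e^{ty_i}\}$. The crux of the argument---and the step I expect to be the main obstacle---is obtaining a bound on each MGF in which ${\rm Var}\{y_i\}$ (rather than merely $M^{2}$) appears in the exponent; this is what distinguishes the Bernstein form from the weaker Hoeffding bound. I would expand $\mathbb{E}\{e^{ty_i}\} = 1 + \sum_{k\geq 2} t^{k}\,\mathbb{E}\{y_i^{k}\}/k!$, use the centered-moment bound $|\mathbb{E}\{y_i^{k}\}|\leq M^{k-2}\,{\rm Var}\{y_i\}$, and apply the elementary factorial estimate $k! \geq 2\cdot 3^{k-2}$ for $k\geq 2$. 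Summing the resulting geometric series yields $\mathbb{E}\{e^{ty_i}\} \leq 1 + \frac{t^{2}\,{\rm Var}\{y_i\}}{2(1-Mt/3)} \leq \exp\!\left(\frac{t^{2}\,{\rm Var}\{y_i\}}{2(1-Mt/3)}\right)$. Multiplying over $i$ gives $\prod_{i=1}^{l}\mathbb{E}\{e^{ty_i\}} \leq \exp\!\left(\frac{t^{2}\,l^{2}\,{\rm Var}\{x\}}{2(1-Mt/3)}\right)$.

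Finally, I would optimize the exponent $-tl\epsilon + \frac{t^{2}\,l^{2}\,{\rm Var}\{x\}}{2(1-Mt/3)}$ over $t$ by choosing $t^{\ast} = \epsilon\big/\bigl(l\,{\rm Var}\{x\} + M\epsilon/3\bigr)$, which lies in $(0,3/M)$; direct substitution and simplification produce $\mathbb{P}(S\geq l\epsilon) \leq \exp\!\left(-\frac{l\epsilon^{2}}{2(l\,{\rm Var}\{x\}+M\epsilon/3)}\right)$. The lower tail $\mathbb{P}(S\leq -l\epsilon)$ is handled by repeating the argument with $-y_i$ in place of $y_i$ (the hypotheses are symmetric in sign), and a union bound over the two tails introduces the factor $2$ in the stated inequality, completing the proof after taking complements.
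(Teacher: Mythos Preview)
Your proposal is a correct, standard derivation of Bernstein's inequality via the moment-generating-function method; the centering, the MGF bound using $k!\geq 2\cdot 3^{k-2}$, the choice of $t^{\ast}$, and the union bound all check out. Note, however, that the paper does not supply a proof for this lemma at all: it is simply quoted from~\cite{ChLu06} and used as a black box in the subsequent theorem, so there is no ``paper's own proof'' to compare against.
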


\begin{theorem}\label{th-rf}
     For any pair of  nodes $i\neq j$, and  parameters  $ \epsilon,\sigma,\delta \in(0,1) $, if $l$ is chosen obeying   $ l =\left \lceil \frac{1}{(2+d_j)^2}     ( \frac{1}{2\epsilon^2} + \frac{2}{3\epsilon }    )         \log(\frac{2}{\delta} ) \right \rceil $, then the approximation $\widebar{\omega}_{ij}$ of $ \omega_{ij}$  returned by  Algorithm~\ref{alg-rf+} satisfies the following relation with  probability of at least $1-\delta$:
	\begin{equation}\label{eq-omegaij}
		  \omega_{ij}-\epsilon 	\leq 	\widebar{\omega}_{ij}   \leq   \omega_{ij} +\epsilon.
	\end{equation} 
        For the case that $i = j$, and for parameters  $ \epsilon,\delta \in(0,1) $, if $l$ is chosen obeying    $l =\left \lceil  (\frac{2}{3\epsilon}+\frac{1}{4\epsilon^2})\log(\frac{2}{\delta})  \right \rceil$, then the approximation $\widebar{\omega}_{ii}$ of $ \omega_{ii}$  returned by  Algorithm~\ref{alg-rf+} satisfies the following relation with    probability of at least $1-\delta$:
	\begin{equation}\label{eq-omegaii}
		(1-\epsilon) \omega_{ii}	\leq 	\widebar{\omega}_{ii}   \leq (1+\epsilon) \omega_{ii}.
	\end{equation} 
\end{theorem}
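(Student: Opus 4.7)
The plan is to apply the Bernstein-type Chernoff bound in Lemma~\ref{le-chernoff} to the iid average $\widebar{\omega}_{ij} = \frac{1}{l}\sum_{k=1}^{l}\widebar{\omega}_{ij}(\phi_k)$, where each $\phi_k$ is an independent uniform sample from $\calF$. Lemmas~\ref{le-baromega} and \ref{le-baromgaii} already supply $\mathbb{E}[\widebar{\omega}_{ij}(\phi_k)] = \omega_{ij}$ and the exact per-sample variance, so the remaining tasks are (i) to determine the maximum deviation $M$ of a single sample from its mean, and (ii) to plug $M$ together with an upper bound on the variance into the tail bound and solve for $l$ so that the failure probability is at most $\delta$.

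For the off-diagonal case $i \neq j$, I would first observe that $\widebar{\omega}_{ij}(\phi) = \frac{1}{2+d_j}\bigl(\widehat{\omega}_{ij}(\phi) + \sum_{k \in N^-_j}\widehat{\omega}_{ik}(\phi)\bigr)$ takes only the values $0$ or $\frac{1}{2+d_j}$: node $i$ has exactly one root in any given $\phi$, so at most one of the indicators appearing in the sum can be nonzero. Hence $M = \frac{1}{2+d_j}$. For the variance, Lemma~\ref{le-baromega} gives $\frac{\omega_{ij}}{2+d_j}-\omega_{ij}^2$; writing $p=(2+d_j)\omega_{ij}\in[0,1]$ (using $\omega_{ij}\le \frac{1}{2+d_j}$ from the same lemma), this equals $\frac{p(1-p)}{(2+d_j)^2}\le \frac{1}{4(2+d_j)^2}$. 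Substituting $M$ and this variance bound into Lemma~\ref{le-chernoff} with tolerance $\epsilon$ and forcing $2\exp(-\cdot)\le\delta$ yields the sample complexity in \eqref{eq-omegaij}.

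For the diagonal case $i = j$, I would use the decomposition $\widebar{\omega}_{ii}(\phi) = \frac{1}{1+d_i}\bigl(1 + \sum_{k\in N^-_i}\widehat{\omega}_{ik}(\phi)\bigr)$, which lies in $\{\frac{1}{1+d_i},\frac{2}{1+d_i}\}$, while $\omega_{ii} \in \bigl[\frac{1}{1+d_i},\frac{2}{2+d_i}\bigr]$. This yields $M = |\widebar{\omega}_{ii}(\phi) - \omega_{ii}| \le \frac{1}{1+d_i} \le \omega_{ii}$. I would then restate \eqref{eq-omegaii} as the additive event $|\widebar{\omega}_{ii}-\omega_{ii}|\le\epsilon\omega_{ii}$ and apply Lemma~\ref{le-chernoff} with tolerance $\epsilon\omega_{ii}$. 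For the variance, a short completing-the-square on the expression $\frac{3\omega_{ii}}{1+d_i}-\frac{2}{(1+d_i)^2}-\omega_{ii}^2$ from Lemma~\ref{le-baromgaii} gives the clean bound $\frac{1}{4(1+d_i)^2}\le \frac{\omega_{ii}^2}{4}$. Substituting these estimates and solving for $l$ produces a threshold that depends only on $\epsilon$ and $\delta$.

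The main obstacle will be the diagonal case: one must demonstrate that both the per-sample range and the per-sample variance scale like $\omega_{ii}$ and $\omega_{ii}^{2}$ respectively, so that when the additive Bernstein bound is converted into the multiplicative form $(1\pm\epsilon)\omega_{ii}$, every appearance of $1+d_i$ is absorbed and the final $l$ is independent of $d_i$. The off-diagonal case is less delicate because the statement permits a $(2+d_j)^{-2}$ dependence in the threshold, which is exactly what the $p(1-p)$ variance bound supplies.
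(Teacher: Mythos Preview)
Your plan matches the paper's argument almost exactly: both apply Lemma~\ref{le-chernoff} with $M=\frac{1}{2+d_j}$ for the off-diagonal case and $M=\frac{1}{1+d_i}\le\omega_{ii}$ for the diagonal, then insert the per-sample variances from Lemmas~\ref{le-baromega} and~\ref{le-baromgaii} and control them via $\omega_{ij}\le\frac{1}{2+d_j}$ and $\omega_{ii}\ge\frac{1}{1+d_i}$.

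The one place where your argument falls short of the stated constant is the diagonal variance step. You first maximise the variance over $\omega_{ii}$ to get ${\rm Var}(\widebar\omega_{ii}(\phi))\le\frac{1}{4(1+d_i)^2}$, and then separately invoke $\frac{1}{1+d_i}\le\omega_{ii}$ to conclude ${\rm Var}\le\frac{\omega_{ii}^2}{4}$. These two inequalities are sharp at \emph{different} values of $\omega_{ii}$ (namely $\tfrac{3}{2(1+d_i)}$ and $\tfrac{1}{1+d_i}$), so chaining them loses a factor of~$2$. The paper instead completes the square directly in the ratio: with $t=\frac{1}{(1+d_i)\omega_{ii}}$ one has $\frac{{\rm Var}}{\omega_{ii}^2}=3t-2t^2-1=-2\bigl(t-\tfrac34\bigr)^2+\tfrac18\le\tfrac18$. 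Your route therefore produces $l=\bigl\lceil(\tfrac{2}{3\epsilon}+\tfrac{1}{2\epsilon^2})\log\tfrac{2}{\delta}\bigr\rceil$ rather than the claimed $\bigl\lceil(\tfrac{2}{3\epsilon}+\tfrac{1}{4\epsilon^2})\log\tfrac{2}{\delta}\bigr\rceil$; to recover the exact constant in~\eqref{eq-omegaii} you need this one-step bound on the ratio.
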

    
\begin{proof}
 For the case that $i\neq j$, the output $\widebar{\omega}_{ij}$ of Algorithm~\ref{alg-rf+} is  $\widebar{\omega}_{ij} = \frac{1}{l} \sum_{k=1}^l \widebar{\omega}_{ij}(\phi_k)$. Since the variance of $\widebar{\omega}_{ij}(\phi_k)$ is $\frac{\omega_{ij}}{2+d_j} - \omega_{ij}^2$ for $k = 1,\cdots, l$, and the $l$ variables are independent,  we can compute the variance of $\widebar{\omega}_{ij}$ as ${\rm Var}(\widebar{\omega}_{ij}) =\frac{1}{l}(\frac{\omega_{ij}}{2+d_j} - \omega_{ij}^2)$. To obtain the  absolute error bound given by~\eqref{eq-omegaij},  $\widebar{\omega}_{ij}$ needs to satisfy 
 $$\mathbb{P}(| \widebar{\omega}_{ij}-\omega_{ij}| \geq \epsilon   )\leq \delta.$$

We now show that the above relation holds true. To this end, we designate $x_j = \widebar{\omega}_{ij}(\phi_k)$ for $1 \leq k \leq l$ and $x = \widebar{\omega}_{ij}$, and invoke the Chernoff bound in Lemma~\ref{le-chernoff}. Then, in order to prove~\eqref{eq-omegaij}, we only need to show that the following inequality holds:
$$2\exp{\left(-\frac{l\epsilon^2 }{2({\rm Var}\{\widebar{\omega}_{ij}\}l+M\epsilon\omega_{ij}/3)}\right)} \leq \delta,$$
which leads to
\begin{equation}\label{ltomeet}
l \geq \log(\frac{2}{\delta})\left(\frac{2{\rm Var}\{\widebar{\omega}_{ij}\}l}{\epsilon^2 }+\frac{2M\omega_{ij}}{3\epsilon}\right).
\end{equation}
Since $|\widehat{\omega}_{ij} - \omega_{ij}| \leq \frac{1}{2+d_j}$, we can set $M=\frac{1}{2+d_j}$. Considering  that ${\rm Var}\{ \widebar{\omega}_{ij}(\phi_j)\} = \frac{1}{l}(\frac{\omega_{ij}}{2+d_j} - \omega_{ij}^2)$,   inequality~\eqref{ltomeet} is reduced to:

\begin{equation}
    l \geq \log(\frac{2}{\delta})\left(\frac{2\omega_{ij}}{\epsilon^2(2+d_j)}
+\frac{2\omega_{ij}}{3\epsilon(2+d_j)}-\frac{2\omega_{ij}^2}{\epsilon^2}\right).
\end{equation}

Thus, for any pair of  nodes $i\neq j$, since $0\leq \omega_{ij}\leq \frac{1}{2+d_j}$,  selecting 
$ l =\left \lceil    \left( \frac{1}{2\epsilon^2} + \frac{2}{3\epsilon }    \right) \frac{\log{\frac{2}{\delta} }}{(2+d_j)^2}  \right \rceil $ ensures the required inequality~\eqref{ltomeet} always holds. This completes the proof.

For the case that $i = j$, the output $\widebar{\omega}_{ii}$ of Algorithm~\ref{alg-rf+} is  $\widebar{\omega}_{ii} = \frac{1}{l} \sum_{k=1}^l \widebar{\omega}_{ii}(\phi_k)$.   For a spanning converging forest $\phi \in \calF$,  $\widebar{\omega}_{ii}(\phi)$ is either $\frac{1}{1+d_i}$ or $\frac{2}{1+d_i}$. Considering   $\frac{1}{1+d_i}\leq \omega_{ii} \leq \frac{2}{2+d_i}$, it follows that $|\widebar{\omega}_{ii} - \omega_{ii}| \leq \frac{1}{1+d_i}$. Then we can set the bound $M$ as $M=\frac{1}{1+d_i}$.  Employing a similar analytical approach as above,  the number of $l$ needs to satisfy the following inequality:
  \begin{equation}\label{l}
      l \geq \log(\frac{2}{\delta})\left(\frac{2{\rm Var}\{\widebar{\omega}_{ii}\}l}{\epsilon^2\omega_{ii}^2}+\frac{2}{3(1+d_i)\epsilon\omega_{ii}}\right). 
 \end{equation}
Given that $\widebar{\omega}_{ii} = \frac{1}{l}(\frac{3\omega_{ii}}{1+d_i}-\frac{2}{(1+d_i)^2}-\omega_{ii}^2)$, we derive that
    \begin{equation}
    \begin{aligned}
              & \frac{{\rm Var}(\widebar{\omega}_{ii})l}{\omega_{ii}^2}  = \frac{3}{(1+d_i)\omega_{ii}} - \frac{2}{(1+d_i)^2\omega_{ii}^2} -1 \\&= -\frac{2}{(1+d_i)^2}(\frac{1}{\omega_{ii}}-\frac{3(1+d_i)}{4})^2+\frac{1}{8}  \leq \frac{1}{8}
    \end{aligned}
    \end{equation} 
Thus, with $\frac{{\rm Var}(\widebar{\omega}_{ii})l}{\omega_{ii}^2} \leq \frac{1}{8}$ and $(1+d_i)\omega_{ii} \leq 1$, choosing $l = \left \lceil (\frac{2}{3\epsilon} + \frac{1}{4\epsilon^2})\log\left(\frac{2}{\delta}\right) \right \rceil$ ensures that inequality~\eqref{l} is always satisfied, which completes the proof.
\end{proof}

Based on Theorem~\ref{th-rf}, when the parameters $\epsilon$, $\sigma$, and $\delta$ are fixed constants, the number of spanning converging forests $l$ required by Algorithm~\ref{alg-rf+} to achieve an $\epsilon$ absolute error for non-diagonal entries and relative error for diagonal entries does not increase with the expansion of the graph. Therefore, the time complexity of Algorithm~\ref{alg-rf+} can be considered as $O(1)$ for achieving a satisfactory error guarantee, regardless of the graph size.

\section{Fast Query of Entries of Forest Matrix for Evolving Graphs}
In the previous section, we developed estimators approximating the entries of the forest matrix in static directed graphs. However, real-world graphs often evolve dynamically. This section addresses evolving graphs,  focusing on two types of updates: edge insertion and edge deletion.

\subsection{Problem Statement}
Consider an initial graph $\calG_0 = (V_0,E_0)$ and a corresponding list $L_0$ of $l_0$ spanning converging forests, which is uniformly sampled using an extension of Wilson's algorithm.  In this dynamic context, there are two possible requests: updates  and  queries. For the request of updates, we restrict our focus to edge insertions and deletions. Other updates such as nodes insertions/deletions can be easily converted to a sequence of edges insertion/deletions. For the $k$-th edge update, let $e_k = (u_k,v_k)$ denote the edge being modified, resulting in an updated graph $\calG_k = (V_k,E_k)$. Specifically, for edge insertions, $E_{k} = E_{k-1} \cup \{e_k\}$, and for deletions, $E_{k} = E_{k-1} \setminus \{e_k\}$. 
 
Query requests involve asking for specific values of the forest matrix entries of $\calG_k$, denoted by $\mathbf{\Omega}_k$. Let $\calF(\calG_k)$  denote the set of all spanning converging forests of graph $\calG_k$. Utilizing the methods described earlier, if we have spanning converging forests uniformly sampled from $\calF(\calG_k)$, Algorithm \textsc{SFQPlus} can provide an estimation. However, resampling these forests after every update requires  time complexity of $O(ln)$~\cite{SuZh23}, which is inefficient. This naturally leads to the following question.     Suppose that $L_{k-1} $ is a spanning converging forest list  with $l_{k-1}$ spanning converging forests uniformly sampled from the set $\calF(\calG_{k-1})$.  When an update $e_k = (u_k,v_k)$ occurs, can we develop an efficient method to adapt the list $L_{k-1}$ into $L_k$, ensuring that each spanning converging forest in set  $\calF(\calG_k)$ has an equal probability  appearing in the updated forest list $L_k$? In the following, we address this challenge by proposing  a method with an expected cost of $O(1)$ for updating the forest list, while preserving the uniform  property of the sampling.

\subsection{Edge Insertion}
In this subsection, we delve into the edge insertion update. Consider the $k$-th update involving the insertion of an edge $e_k = (u_k,v_k)$ to the graph $\calG_{k-1}$. We start with a spanning converging forest list $L_{k-1} = \{\phi_1, \cdots, \phi_{l_{k-1}}\}$ with $l_{k-1}$ spanning converging forests. Each forest from the set $\calF(\calG_{k-1})$ is equally likely to be included in  list $L_{k-1}$. Our goal is to modify $L_{k-1}$ into $L_k$ such that the forests in  $L_k$ are uniformly sampled from the updated set $\calF(\calG_{k})$.

 Given that $E_k = E_{k-1} \cup \{e_k\}$, it follows that $\calF(\calG_{k-1})$ is a subset of $\calF(\calG_k)$, indicating that all forests in $\calF(\calG_{k-1})$ are also contained in $\calF(\calG_k)$. We now focus on those spanning converging forests that are in $\calF(\calG_k)$ but not in $\calF(\calG_{k-1})$. Define $\Delta \calF_k$ as $\calF(\calG_k) \setminus \calF(\calG_{k-1})$. It's clear that $\Delta \calF_k$ is non-empty, and all its forests include the newly added edge $e_k = (u_k,v_k)$.  We define a subset $\calF(\calG'_{k-1}) \subset \calF(\calG_{k-1})$, where $\calF(\calG'_{k-1}) = \{\phi: \phi \in \calF(\calG_{k-1}), r_{\phi}(u_k) = u_k, r_{\phi}(v_k) \neq u_k\}$. Lemma~\ref{le-edge}  establishes a bijection between $\calF(\calG'_{k-1})$ and $\Delta \calF_k$.

\begin{lemma}\label{le-edge}
For any spanning converging forest $\phi \in \Delta \calF_k$, the forest $\phi' = \phi \setminus \{e_k\}$ belongs to $\calF(\calG'_{k-1})$. This mapping  constitutes a bijection between $\calF(\calG'_{k-1})$ and $\Delta \calF_k$.
\end{lemma}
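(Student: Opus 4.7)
The plan is to verify that the map $\phi \mapsto \phi \setminus \{e_k\}$ is well-defined, injective, and surjective between $\Delta\calF_k$ and $\calF(\calG'_{k-1})$.

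First, I would check well-definedness. Given $\phi \in \Delta\calF_k$, by definition $\phi$ is a spanning converging forest of $\calG_k$ containing the new edge $e_k = (u_k,v_k)$, so $\phi' = \phi \setminus \{e_k\}$ uses only edges of $E_{k-1}$. Since $\phi'$ spans $V$ and every out-degree in $\phi'$ is at most its counterpart in $\phi$, and since removing an edge cannot create a loop, $\phi'$ is a spanning converging forest of $\calG_{k-1}$. Deleting $e_k$ drops the out-degree of $u_k$ from one to zero, so $u_k$ becomes a root, giving $r_{\phi'}(u_k) = u_k$. To see $r_{\phi'}(v_k) \neq u_k$, I would argue by contradiction: if $r_{\phi'}(v_k) = u_k$, there would be a directed path from $v_k$ to $u_k$ in $\phi' \subset \phi$, which together with $e_k = (u_k,v_k) \in \phi$ would form a loop, contradicting that $\phi$ is a forest. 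Hence $\phi' \in \calF(\calG'_{k-1})$.

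Injectivity is immediate: both preimages $\phi_1,\phi_2 \in \Delta\calF_k$ contain $e_k$, so $\phi_1 \setminus \{e_k\} = \phi_2 \setminus \{e_k\}$ forces $\phi_1 = \phi_2$. For surjectivity, given any $\phi' \in \calF(\calG'_{k-1})$, I would set $\phi = \phi' \cup \{e_k\}$ and verify $\phi \in \Delta\calF_k$. Because $u_k$ is a root of $\phi'$, its out-degree in $\phi'$ is zero, so inserting $e_k$ raises it to exactly one while leaving all other out-degrees unchanged; thus every node has out-degree at most one in $\phi$. To rule out loop creation, observe that any new loop would have to traverse $e_k$, hence would contain a path from $v_k$ back to $u_k$ inside $\phi'$; but this is ruled out by the hypothesis $r_{\phi'}(v_k) \neq u_k$, since such a path would place $v_k$ in the converging tree rooted at $u_k$. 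Therefore $\phi$ is a spanning converging forest of $\calG_k$ containing $e_k$, i.e., $\phi \in \Delta\calF_k$, and it clearly maps to $\phi'$ under edge removal.

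The main (though modest) obstacle is the loop-avoidance step: one must cleanly translate the structural condition ``$v_k$ is not rooted at $u_k$ in $\phi'$'' into the combinatorial guarantee that gluing the tree containing $u_k$ to the tree containing $v_k$ via the edge $(u_k,v_k)$ yields another converging tree rather than a cycle. Once this correspondence between trees-merging and the two defining conditions of $\calF(\calG'_{k-1})$ is made explicit, the bijection follows directly.
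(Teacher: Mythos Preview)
Your proposal is correct and follows essentially the same approach as the paper's proof: verify that removing $e_k$ lands in $\calF(\calG'_{k-1})$, note injectivity is trivial, and construct the inverse by adding $e_k$ back. Your treatment is in fact more careful than the paper's, which asserts the key facts (that $\phi'$ is a valid spanning converging forest, that $r_{\phi'}(v_k)\neq u_k$, and that $\phi'\cup\{e_k\}$ is well-defined) without spelling out the loop-avoidance argument you provide.
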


\begin{proof}
Consider a spanning converging forest $\phi \in \Delta \calF_k$, which includes the newly added edge $e_k = (u_k,v_k)$. By removing this edge, we obtain the forest $\phi' = \phi \setminus \{e_k\}$. Since all edges of $\phi'$ are part of the graph $\calG_{k-1}$, it is evident that $\phi'$ belongs to $\calF(\calG_{k-1})$. Moreover, given that $r_{\phi'}(u_k) = u_k$ and $r_{\phi'}(v_k) \neq u_k$, $\phi'$ is also a part of $\calF(\calG'_{k-1})$. 

Moreover, for any two distinct spanning converging forests $\phi_1, \phi_2 \in \Delta \calF_k$, their corresponding forests $\phi_1'$ and $\phi_2'$ obtained by deleting edge $e_k$ are also distinct. Additionally, for any spanning forest $\phi' \in \calF(\calG'_{k-1})$, we consider the forest $\phi = \phi' \cup \{e_k\}$. The conditions $r_{\phi'}(u_k) = u_k$ and $r_{\phi'}(v_k) \neq u_k$ guarantee that $\phi$ is well-defined and belongs to $L_k$, which finishes the proof.
\end{proof}

With Lemma~\ref{le-edge}, we propose the edge insertion update Algorithm~\ref{alg-addedge}  to update the forest list $L_{k-1}$ to $L_k$.

\begin{algorithm}[htbp!]
	\caption{$\textsc{Insert-Update}(\calG_{k-1},L_{k-1},l_{k-1},e_k )$}
	\label{alg-addedge}
	\Input{  A digraph $\calG_{k-1} = (V_{k-1},E_{k-1})$,  a list of $l_{k-1}$  spanning converging forest  $L_{k-1}$ uniformly sampled from $\calF(\calG_{k-1})$, an edge $e_k = (u_k,v_k)$	to be inserted}
	\Output{An updated spanning converging forest list $L_k$}
{
 \textbf{Initialize} :
	$L_k \leftarrow L_{k-1} $ \\
}

 \For{$\phi$ in $L_{k-1}$}{

\If{$r_{\phi}(u_k) = u_k$ \& $r_{\phi}(v_k) \neq u_k$ }{
    $\widehat{\phi} \leftarrow  \phi\cup \{e_k\}$\\
    \textbf{Add} $\widehat{\phi}$  to   $ L_k  $
} 
}
\textbf{return} $L_k$ \;
\end{algorithm}

\begin{theorem}\label{th-addedge}
    For a spanning converging forest list $L_{k-1}$ with $l_{k-1}$ forests uniformly sampled from $\calF(\calG_{k-1})$, and the insertion edge $e_k = (u_k,v_k)$,    all forests in the updated set   $\calF(\calG_{k})$ have the same probability to be included  in  the list returned by Algorithm~\ref{alg-addedge}.
\end{theorem}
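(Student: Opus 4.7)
The plan is to decompose $\calF(\calG_k)$ into two parts, namely $\calF(\calG_{k-1})$ (forests of $\calG_k$ not containing the inserted edge $e_k$) and $\Delta\calF_k$ (forests of $\calG_k$ that do contain $e_k$), and analyze the inclusion probability on each part separately using Lemma~\ref{le-edge}. By the hypothesis, the $l_{k-1}$ entries of $L_{k-1}$ are independent and each uniform on $\calF(\calG_{k-1})$, so any fixed $\phi\in\calF(\calG_{k-1})$ appears in $L_{k-1}$ with probability
\[
p \;:=\; 1-\left(1-\frac{1}{|\calF(\calG_{k-1})|}\right)^{l_{k-1}},
\]
a quantity independent of the specific $\phi$.

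For $\psi\in\calF(\calG_{k-1})$, I would first note that Algorithm~\ref{alg-addedge} never removes any entry of $L_{k-1}$ and only appends forests containing the new edge $e_k$; hence the event $\psi\in L_k$ coincides with $\psi\in L_{k-1}$, giving $P(\psi\in L_k)=p$. For $\psi\in\Delta\calF_k$, Lemma~\ref{le-edge} supplies a bijection $\psi\mapsto\psi':=\psi\setminus\{e_k\}$ onto $\calF(\calG'_{k-1})\subseteq\calF(\calG_{k-1})$. The loop of Algorithm~\ref{alg-addedge} appends $\psi=\psi'\cup\{e_k\}$ to $L_k$ precisely when some entry of $L_{k-1}$ equals $\psi'$, since the acceptance test $r_\phi(u_k)=u_k$ and $r_\phi(v_k)\neq u_k$ is automatically satisfied for $\phi=\psi'\in\calF(\calG'_{k-1})$ by the very definition of that subset. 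Consequently $P(\psi\in L_k)=P(\psi'\in L_{k-1})=p$.

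Combining the two cases shows that every $\psi\in\calF(\calG_k)$ is included in $L_k$ with the same probability $p$, which is the desired conclusion. The only subtle point is verifying that the algorithm's acceptance condition matches exactly the subset $\calF(\calG'_{k-1})$ featured in Lemma~\ref{le-edge}; once this alignment is confirmed, Lemma~\ref{le-edge} performs the substantive combinatorial work of pairing each new forest with a unique pre-image in $L_{k-1}$, and the remainder of the argument reduces to an immediate invocation of the uniform-sampling hypothesis on $L_{k-1}$.
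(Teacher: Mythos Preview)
Your proposal is correct and follows essentially the same approach as the paper: both decompose $\calF(\calG_k)$ into $\calF(\calG_{k-1})$ and $\Delta\calF_k$, then use Lemma~\ref{le-edge} to reduce the inclusion probability of each $\psi\in\Delta\calF_k$ to that of its pre-image $\psi'\in\calF(\calG_{k-1})$. The only cosmetic difference is that the paper phrases the conclusion via a four-case pairwise comparison $\mathbb{P}(\phi_1\in L_k)=\mathbb{P}(\phi_2\in L_k)$ rather than computing an explicit common value $p$ (note your closed-form for $p$ presumes independent sampling with replacement, which is not stated in the theorem, but your argument only uses that $p$ is constant across $\calF(\calG_{k-1})$, so this does not affect validity).
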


\begin{proof}
We partition $\calF(\calG_k)$ into two disjoint subsets: $\calF(\calG_{k-1})$ and $\Delta \calF_k$, with $\Delta \calF_k = \calF(\calG_k) \setminus \calF(\calG_{k-1})$.  In order to prove the theorem, we distinguish four cases: (i)  $\phi_1 \in \calF(\calG_{k-1})$ and $\phi_2 \in \calF(\calG_{k-1})$, (ii)  $\phi_1 \in \Delta \calF_k$ and $\phi_2 \in \Delta \calF_k$, (iii)  $\phi_1 \in \Delta \calF(\calG_{k-1})$ and $\phi_2 \in \Delta \calF_k$, and (iv)  $\phi_1 \in \Delta \calF_k$ and $\phi_2 \in \Delta \calF(\calG_{k-1})$.
Moreover, for the convenience of description, let  $\mathbb{P}(\phi_1 \in L_k)$ denote the probability that  forest $\phi_1$ is in $L_k$. In a similar way, we can define other probabilities.  Note that all forests in  $L_{k-1}$ are uniformly sampled from $\calF(\calG_{k-1})$.
Then, the theorem can be proved as follows.

For the first case that $\phi_1 \in \calF(\calG_{k-1})$ and $\phi_2 \in \calF(\calG_{k-1})$,  we have $\mathbb{P}(\phi_1 \in L_k) = \mathbb{P}(\phi_1 \in L_{k-1}) = \mathbb{P}(\phi_2 \in L_{k-1}) = \mathbb{P}(\phi_2 \in L_k)$.

For the second case that $\phi_1 \in \Delta \calF_k$ and $\phi_2 \in \Delta \calF_k$, define $\phi_1' = \phi_1 \setminus \{e_k\}$ and $\phi_2' = \phi_2 \setminus \{e_k\}$. Then we have that $\mathbb{P}(\phi_1 \in L_k) = \mathbb{P}(\phi_1' \in L_{k-1}) = \mathbb{P}(\phi_2' \in L_{k-1}) = \mathbb{P}(\phi_2 \in L_k)$.

For the third case that $\phi_1 \in \Delta \calF(\calG_{k-1})$ and $\phi_2 \in \Delta \calF_k$,  define $\phi_2' = \phi_2 \setminus \{e_k\}$. Then we have  $\mathbb{P}(\phi_1 \in L_k) = \mathbb{P}(\phi_1 \in L_{k-1}) = \mathbb{P}(\phi_2' \in L_{k-1}) = \mathbb{P}(\phi_2 \in L_k)$. 

For the fourth case that $\phi_1 \in \Delta \calF_k$ and $\phi_2 \in \Delta \calF(\calG_{k-1})$, using the same approach as  the third case, we obtain $\mathbb{P}(\phi_1 \in L_k) = \mathbb{P}(\phi_2 \in L_k)$. 

Thus, for two  distinct forests $\phi_1, \phi_2 \in \calF(\calG_k)$, they have  equal probability to appear in $L_k$. which finishes the proof.
\end{proof}

 Theorem~\ref{th-addedge} demonstrates that the list $L_k$, generated by Algorithm~\ref{alg-addedge}, achieves uniform sampling from the updated set $\calF(\calG_{k})$. Additionally, the time complexity of Algorithm~\ref{alg-addedge} is $O(l_{k-1})$, where $l_{k-1}$ is the number of forests in the initial list $L_{k-1}$. 

 \begin{example}
Consider a simple graph $\calG_0$  in Figure~\ref{ftoyadd}, which has $3$ nodes and $3$ edges. Graph $\calG_0$ comprises 7 spanning converging forests, highlighted in the first row of Figure~\ref{ftoyadd}. After the insertion of edge $e  = (1,3)$, the graph updates to $\calG_1$, which contains 9 spanning converging forests forming $\calF(\calG_1)$, as demonstrated in the second row of Figure~\ref{ftoyadd}. Suppose that we have a forest list $L_0$  uniformly sampled from $\calF(\calG_0)$. A straightforward  interpretation of Algorithm~\ref{alg-addedge}  for updating the list $L_0$ to $L_1$ involves two steps: first creating a copy of $L_0$, and then attempting to add the new edge into each forest  in the list $L_0$. For the 7  forests in $L_0$, only two forests $\phi_1$ and $\phi_2$, distinguished by a yellow background, evolve into the valid forests $\widehat{\phi}_1$ and $\widehat{\phi}_2$ with the incorporation of the edge $e = (1,3)$.  The addition of edge $e$ to other forests either generates a cycle or results in a node with an out-degree of 2, both of which contradict the definition of spanning converging forests.  Lemma~\ref{le-edge} theoretically validates that there is a bijection   between the sets $\{\phi_1,\phi_2 \}$ and $\{\widehat{\phi}_1,\widehat{\phi}_2 \}$, which validates that each forest in $\calF(\calG_1) $ has the same probability to appear in $L_1$. 
 
 \end{example}

\begin{figure}[htbp!]
	\centering
	\includegraphics[width=1\columnwidth]{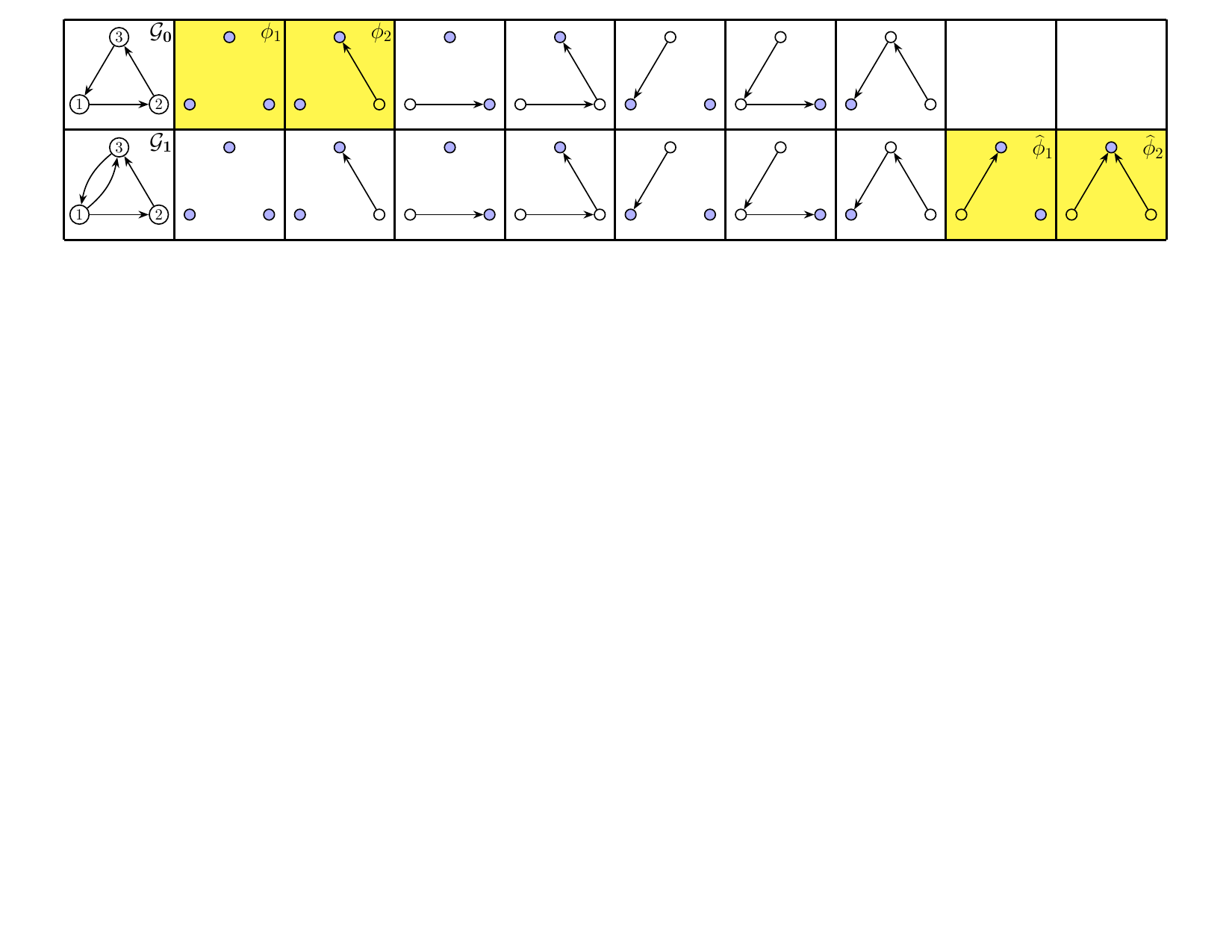}
	\caption{A toy digraph $\calG_0$ and updated graph $\calG_1$ with their spanning converging forests.  Blue nodes are roots.}\label{ftoyadd}	
\end{figure}

\subsection{Edge Deletion}
In this subsection, we consider the edge deletion update. Specifically, we consider the deletion of an edge $e_k = (u_k,v_k)$ from the graph  $\calG_{k-1}$, resulting in the updated graph $\calG_k = \calG_{k-1} \setminus \{e_k\}$. Our objective is to adapt the initial forest list $L_{k-1}$ into $L_k$, ensuring the uniformity property of the sampling is preserved.

Given that the updated edge set $E_k$ is defined as $E_k = E_{k-1} \setminus \{e_k\}$, it is evident that $\calF(\calG_{k})$ is a subset of $\calF(\calG_{k-1})$. A straightforward approach comes to our mind: for the forests in $L_{k-1}$, we can define the updated list $L_k = \{ \phi: \phi \in L_{k-1}, e_k \notin \phi  \}$. That is, $L_k$ is the subset of $L_{k-1}$ excluding any forests that contain the edge $e_k$. This method ensures that all forests in $\calF(\calG_{k})$ are equally likely to be included in $L_k$, under the assumption that all forests in $L_{k-1}$ are uniformly sampled from $\calF(\calG_{k-1})$.

However, this method poses a challenge. The number of forests $l_k$ in the updated list $L_k$ will always be less than or equal to $l_{k-1}$. Consequently, if all updates are edge deletions, the size of our sampling forest lists will continually diminish. This reduction in sampling size could result in decreased accuracy when responding to query requests. Therefore, the challenge arises: can we devise an alternative method that maintains the uniformity property without leading to a reduction in the number of sampling forests?

To address the challenge, we define $\Delta \calF(\calG_k)  = \calF(\calG_{k-1}) \setminus \calF(\calG_{k})$. The solution lies in not merely discarding the forests in $\Delta \calF(\calG_k)$ but effectively utilizing them.   Define $\calF(\calG'_{k}) = \{\phi: \phi \in \calF(\calG_{k}), r_{\phi}(u_k) = u_k, r_{\phi}(v_k) \neq u_k\}$. Considering that edge insertion and deletion are inverse operations, Lemma~\ref{le-edge} establishes a bijection between $\Delta \calF(\calG_k)$ and $\calF(\calG'_{k})$. Specifically, for a forest $\phi$ in $\Delta \calF(\calG_k)$, the forest $\phi \setminus \{e_k\}$ is included in $\calF(\calG'_{k})$. This insight provides a method to utilize forests in $\Delta \calF(\calG_k)$ without discarding them. We then introduce an edge deletion update method, the pseudocode of which is described in Algorithm~\ref{alg-deleteedge}.
  
\begin{algorithm}[htbp!]
	\caption{$\textsc{Delete-Update}(\calG_{k-1},L_{k-1},l_{k-1},e_k )$}
	\label{alg-deleteedge}
	\Input{  A digraph $\calG_{k-1} = (V_{k-1},E_{k-1})$,  a list of $l_{k-1}$  spanning converging forest  $L_{k-1}$ uniformly sampled from $\calF(\calG_{k-1})$, an edge $e_k = (u_k,v_k)$ to be deleted	}
	\Output{An updated spanning converging forest list $L_k$}
{
 \textbf{Initialize} :
	$L_k \leftarrow \emptyset$ \\
}

 \For{$\phi$ in $L_{k-1}$}{

\If{$e_k\in \phi$ }{
    $\phi \leftarrow \phi\setminus \{e_k\}$\\
    \textbf{Add} $ \phi $  to   $ L_k  $
 
}\ElseIf{$r_{\phi}(u_k) = u_k$ \& $r_{\phi}(v_k) \neq u_k$}{
    \textbf{Add} $ \phi $  to   $ L_k  $
}\Else{
    \textbf{Add} $ \phi $  to   $ L_k  $ twice
}
}
\textbf{return} $L_k$ \;
\end{algorithm}

\begin{theorem}\label{th-deleteedge}
    For a spanning converging forest list $L_{k-1}$ with $l_{k-1}$ forests uniformly sampled from $\calF(\calG_{k-1})$, and the deletion edge $e_k = (u_k,v_k)$,   all forests in the updated set   $\calF(\calG_{k})$ have the same probability to be included  in  the list returned by Algorithm~\ref{alg-deleteedge}.
\end{theorem}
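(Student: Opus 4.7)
The plan is to mirror the four-case analysis used in the proof of Theorem~\ref{th-addedge}, now exploiting the fact that edge deletion is the inverse of edge insertion, so the bijection of Lemma~\ref{le-edge} again applies: the map $\phi\mapsto \phi\setminus\{e_k\}$ is a bijection from $\Delta\calF(\calG_k)=\calF(\calG_{k-1})\setminus\calF(\calG_k)$ onto $\calF(\calG'_k)=\{\phi\in\calF(\calG_k): r_\phi(u_k)=u_k,\;r_\phi(v_k)\neq u_k\}$. First I would partition $\calF(\calG_{k-1})$ into three pieces $A=\calF(\calG'_k)$, $B=\Delta\calF(\calG_k)$, and $C=\calF(\calG_k)\setminus\calF(\calG'_k)$, observe that $|A|=|B|$ by Lemma~\ref{le-edge}, and note that these three classes correspond exactly to the forests in $L_{k-1}$ that trigger the Else-if, If, and Else branches of Algorithm~\ref{alg-deleteedge}, respectively.

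Next, for any fixed $\phi\in\calF(\calG_k)$, I would tally the expected number of copies that a single uniformly drawn slot of $L_{k-1}$ contributes to $L_k$ at value $\phi$. If $\phi\in A$, a copy of $\phi$ is produced exactly when the slot equals $\phi$ (Else-if branch, one copy) or equals $\phi\cup\{e_k\}\in B$ (If branch, one copy); if $\phi\in C$, two copies of $\phi$ are produced precisely when the slot equals $\phi$ (Else branch). By the uniformity of $L_{k-1}$ both scenarios give the same per-slot expectation of $2/|\calF(\calG_{k-1})|$, so every $\phi\in\calF(\calG_k)$ is included in $L_k$ with the same expected multiplicity. I would then repackage this computation into the four-case comparison of $\mathbb{P}(\phi_1\in L_k)$ versus $\mathbb{P}(\phi_2\in L_k)$ used in the proof of Theorem~\ref{th-addedge}, splitting according to whether each of $\phi_1,\phi_2$ lies in $A$ or in $C$ and appealing to Lemma~\ref{le-edge} in the mixed cases.

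The main obstacle is the asymmetry between the Else-if and Else branches: a forest $\phi\in A$ has two distinct preimages in $\calF(\calG_{k-1})$ that can yield it (namely $\phi$ itself and its partner $\phi\cup\{e_k\}\in B$), whereas $\phi\in C$ has only one. The factor of $2$ in the Else branch is introduced precisely to compensate for the missing partner in the bijection, so that the two accounting schemes rebalance. The care required lies in justifying this via Lemma~\ref{le-edge}, namely verifying that for each $\phi\in A$ the set $\phi\cup\{e_k\}$ is indeed a valid element of $B$, while for $\phi\in C$ no such extension is a spanning converging forest of $\calG_{k-1}$, so that the contributions are counted without overlap and together cover all of $\calF(\calG_{k-1})$ exactly once.
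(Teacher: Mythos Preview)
Your proposal is correct and follows exactly the approach the paper intends: the paper simply states that the proof of Theorem~\ref{th-deleteedge} is similar to that of Theorem~\ref{th-addedge}, and your three-way partition $A\cup B\cup C$ of $\calF(\calG_{k-1})$, the use of Lemma~\ref{le-edge} to pair $A$ with $B$, and the accounting that shows every $\phi\in\calF(\calG_k)$ receives the same expected multiplicity $2/|\calF(\calG_{k-1})|$ per slot is precisely the argument being gestured at. You have in fact supplied more detail than the paper does.
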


 The proof of Theorem~\ref{th-deleteedge} is similar to that of  Theorem~\ref{th-addedge}. Theorem~\ref{th-deleteedge} demonstrates that the list $L_k$, generated by Algorithm~\ref{alg-deleteedge}, ensures uniform sampling from the updated set $\calF(\calG_{k})$.
 
 \begin{example}
For a better understanding of Algorithm~\ref{alg-deleteedge}, we present an example  depicted in Figure~\ref{ftoydel}. In this example, we remove the edge $e = (3,1)$ from  graph $\calG_0$, resulting in  graph $\calG_1$. Set $\calF(\calG_0)$  contains 7 spanning converging forests,  shown in the first row of Figure~\ref{ftoydel}, while the set $\calF(\calG_1)$ comprises 4 spanning converging forests,  illustrated in the second row of Figure~\ref{ftoydel}. Assuming a forest list $L_0$ is uniformly sampled from $\calF(\calG_0)$, a straightforward approach might suggest discarding the forests $\phi_5, \phi_6, \phi_7$ in $L_0$ to form $L_1$. However, this strategy will unfortunately reduce the sampling size, as mentioned earlier.

Algorithm~\ref{alg-deleteedge} addresses this challenge   by utilizing the forests $\phi_5, \phi_6, \phi_7$. Lemma~\ref{le-edge} suggests a bijection between $\{ \widehat{\phi}_1,\widehat{\phi}_2,\widehat{\phi}_3\}$ and $\{\phi_5,\phi_6,\phi_7\}$. According to Algorithm~\ref{alg-deleteedge}, for a forest in $L_0$, if it belongs to the set $\{\phi_1, \phi_2, \phi_3\}$, it is directly added to $L_1$. If a forest is part of the set $\{\phi_4, \phi_5, \phi_6\}$, we first remove the edge $e = (3,1)$ and then include the modified forest in $L_1$. This procedure ensures that $\widehat{\phi}_1$ is derived from $\phi_1$ and $\phi_5$, $\widehat{\phi}_2$ from $\phi_2$ and $\phi_6$, and $\widehat{\phi}_3$ from $\phi_3$ and $\phi_7$. To maintain balanced probabilities, Algorithm~\ref{alg-deleteedge} adds $\phi_4$ to $L_1$ twice, which is highlighted with a yellow background, thereby ensuring uniformity in the sampling process from the updated forest set.

 \end{example}

 \begin{figure}[htbp!]
	\centering
	\includegraphics[width=1\columnwidth]{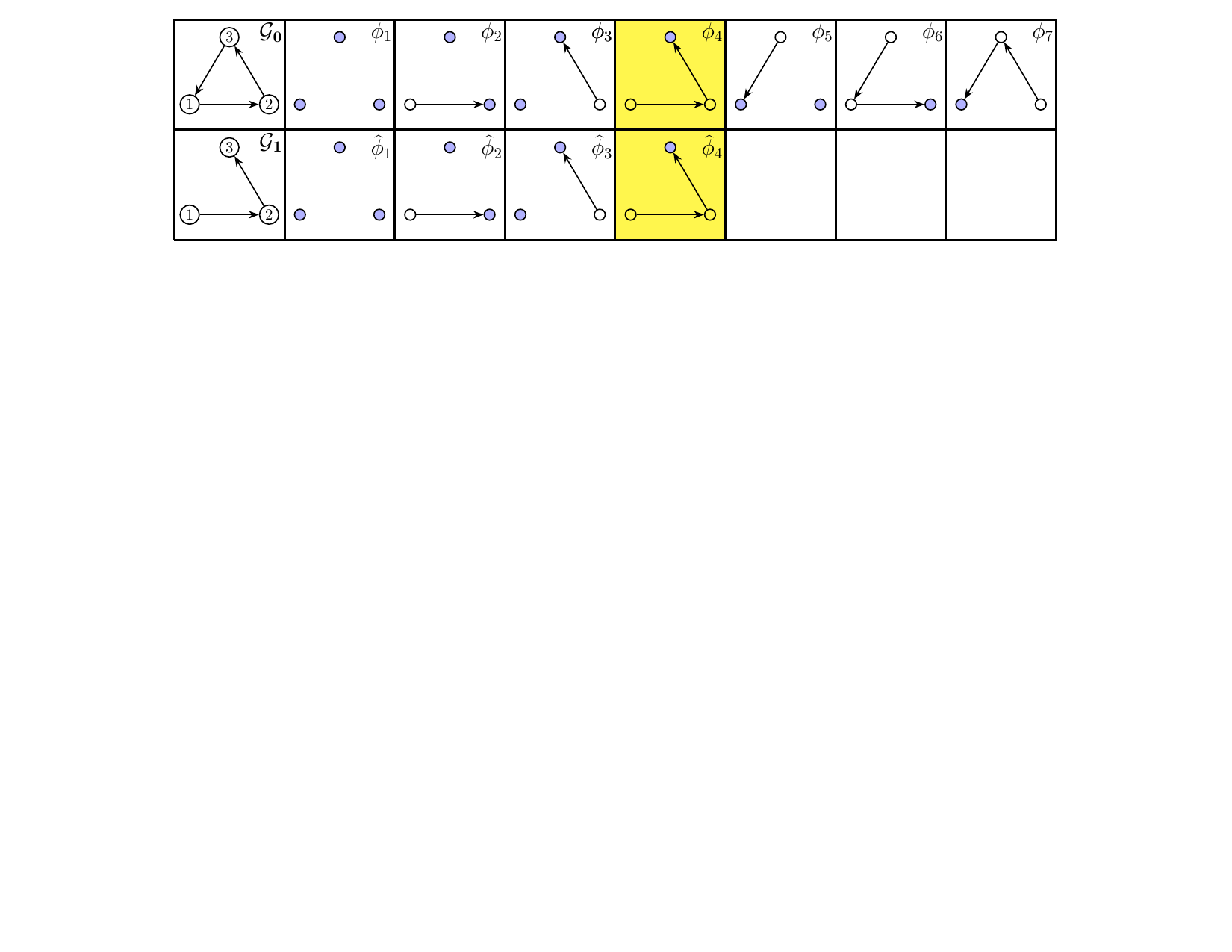}
	\caption{A toy digraph $\calG_0$ and updated graph $\calG_1$ with their spanning converging forests.}\label{ftoydel}	
\end{figure}

\subsection{Prune Technique and Algorithm Details}
In this subsection, we introduce the prune technique   as well as  some details of our algorithms. 

\textbf{Prune Technique.} The time complexity of Algorithm~\ref{alg-addedge} and Algorithm~\ref{alg-deleteedge} is $O(l_{k-1})$, where $l_{k-1}$ is the number of spanning converging forests in the input list $L_{k-1}$. As illustrated previously, the number $l_k$ of spanning converging forests in the updated list $L_{k}$ either equals or exceeds $l_{k-1}$, irrespective of the update being an edge insertion or deletion. Given \(k\) updates, with the number of spanning converging forests incrementally rising as \(l_0 \leq l_1 \leq \ldots \leq l_k\), it becomes essential to mitigate the potential for exponential growth. To this end, we introduce the pruning technique, which involves setting a threshold $l'$, for instance, $l' = 5l_0$.   If \(l_k\) surpasses \(l'\), a pruning action is undertaken to uniformly select $l'$ forests from $L_k$, thereby constituting $L_k'$. This pruning strategy aims to ensure that the time taken for updates and queries remains manageable, avoiding significant increases as the number of updates grows.
 
\textbf{Algorithm Details.} For every spanning converging forest, our algorithms retain  information on the next node for each node in the forest and its root node. This setup results in a space complexity of \(O(ln)\), with \(l\) representing the required number of spanning converging forests. Upon an update, when either Algorithm~\ref{alg-addedge} or Algorithm~\ref{alg-deleteedge} is activated, we avoid  directly replicating  forests to avoid the \(O(n)\) cost in  copying a single forest.  Instead, we only record the updated edge (in the case of insertion) or the adjusted weight (in the case of deletion).  This method, combined with the pruning approach, guarantees that our algorithms achieve  an \(O(1)\) complexity for both query and update operations, enhancing its efficiency and scalability for large-scale network analyses.

\section{Experiments}
In this section, we conduct extensive experiments on various real-life networks in order to evaluate the performance of our algorithms, in terms of accuracy and efficiency.


\subsection{Setup}
 \textbf{Datasets and Equipment.}
 The datasets of selected real networks are publicly available in the KONECT~\cite{Ku13} and SNAP~\cite{LeSo16}. Our experiments are conducted on a diverse range of   undirected and directed networks.  The details of these datasets are presented in Table~\ref{datasets}. All experiments are conducted using the Julia programming language. We conduct all experiments in a computational environment featuring a 2.5 GHz Intel E5-2682v4 CPU with 512GB of primary memory.

\begin{table}[htbp!]\fontsize{8}{11}\caption{Datasets used in experiments. }
\begin{tabular}{cccc}
\hline
{Network} & {Type} & {Nodes} & {Edges} \\\hline
Web-Stanford             & directed              & 281,903                & 2,312,497              \\
Delicious                & undirected            & 536,108                & 1,375,961              \\
Web-Google               & directed              & 875,713                & 5,105,039              \\
Youtube                  & undirected            & 1,134,890              & 2,987,624              \\
Livejournal              & undirected            & 10,690,276             & 112,307,385            \\
Twitter                  & directed              & 41,652,230             & 1,468,365,182          \\ \hline
\end{tabular}\label{datasets}
\end{table}

\noindent\textbf{Algorithms and Parameters.} In the evaluation of forest matrix entry queries, we compare our two algorithms \textsc{SFQ} and \textsc{SFQPlus} with the fast linear equation solvers, since direct matrix inversion is computationally infeasible. For undirected graphs, we consider the fast Laplacian solver~\cite{CoKyMiPaJaPeRaXu14}, which is widely used in   computation and optimization problems~\cite{BaZh21,GrAnPrMe21,ZhZh22}. For directed graphs, the fast Laplacian solver no longer applies. Thus, we choose the GMRES algorithm~\cite{SaSc86} to get the ground truth  with a tolerance set to \(10^{-9}\).  

According to Theorem~\ref{th-rf}, we set $\delta = 0.01$, $\epsilon = 0.03$, and the number of spanning converging forest $l$ is given by $l =\left \lceil  (\frac{2}{3\epsilon}+\frac{1}{4\epsilon^2})\log(\frac{2}{\delta})  \right \rceil$. We set the prune threshold to be $l' = 5l$. Given that Wilson's algorithm can be parallelized efficiently, we use 32 computing cores to speed up the process.

\subsection{Accuracy}
In this subsection, we   evaluate the accuracy of our algorithms \textsc{SFQ} and \textsc{SFQPlus} with the ground truth. For both \textsc{SFQ} and \textsc{SFQPlus}, we consider an undirected graph as a special directed graph,  given that an edge between two nodes in an undirected graph can be considered as two directed edges between the two nodes in a directed graph setting.

\begin{figure}[b!]
	\centering
	\includegraphics[width=1\columnwidth]{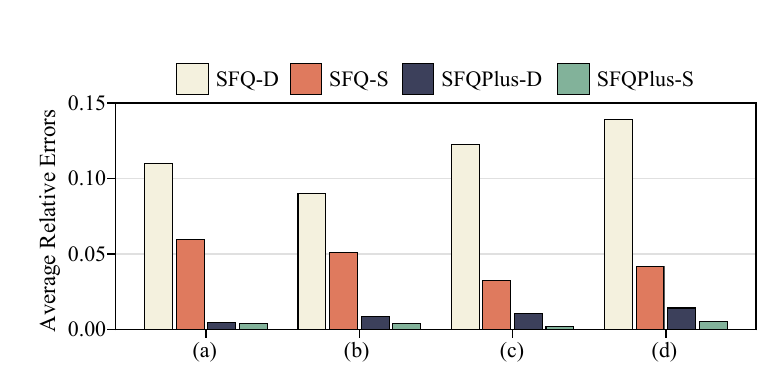}
	\caption{Comparison of  average   relative errors of the diagonals for  algorithms \textsc{SFQ} and \textsc{SFQPlus} on four   graphs: web-Web-Stanford (a), Delicious (b), web-Google    (c), Youtube(d), where suffix  -S  indicates the results on static graphs and  -D  denotes results on the updated graphs.  }\label{f1}	
\end{figure}

Our initial evaluation focuses on the performance of our algorithms in estimating the diagonal of the forest matrix, which has broad applications. 
In our experiments, for a graph $\mathcal{G}=(V,E)$, we randomly select a node $i\in V$. We then obtain the ground truth by employing the fast Laplacian solver for undirected graphs and GMRES for directed graphs to solve the linear equation $\omega_{ii} = \mathbf{e}_i^\top (\mathbf{I}+\mathbf{L})^{-1}\mathbf{e}_i$. Here we choose four graphs: web-Web-Stanford (a), Delicious (b), web-Google (c), Youtube (d), since the solver was unable to process the two large graphs, Livejournal and Twitter,  constrained by time and memory.  Then we apply algorithms \textsc{SFQ} and \textsc{SFQPlus} to get the estimation values $\widehat{\omega}_{ii}$  and $\widebar{\omega}_{ii}$, respectively.  This procedure is repeated 100 times to calculate the average relative errors. In addition to the static graphs, we explore the scenarios where the  graph evolves   with 50 edges inserted and 50 edges deleted. We repeat the node selection procedure for these   updated graphs and calculate the average relative errors.  The results for these settings are reported in Figure~\ref{f1}.

 Figure~\ref{f1} illustrates that, compared to static graphs, the accuracy of both algorithms experiences varying degrees of decline on updated graphs, confirming our earlier analysis. Specifically, \textsc{SFQ}'s accuracy significantly decreases, rendering its results less reliable, while \textsc{SFQPlus} consistently delivers satisfactory outcomes. This highlights the effectiveness of our variance reduction technique in enhancing accuracy.


We then conduct experiments to estimate the forest distance  $\rho_{ij}$ between two distinct nodes $i$ and $j$. The forest distance  $\rho_{ij}$ is defined as  $\rho_{ij} = \omega_{ii}+\omega_{jj}-\omega_{ij}-\omega_{ji}$, which was applied in~\cite{JiBaZh19}  to measure the proximity between nodes $i$ and $j$. It has been  shown that the forest distance based node centrality measure is more discriminating than other frequently used metrics for node importance~\cite{BaZh21}. We obtain the ground truth of $\rho_{ij} $ by solving the linear  equations $\rho_{ij} = (\ee_i^\top - \ee_j^\top)(\II+\LL)^{-1}\ee_i +  (\ee_j^\top - \ee_i^\top)(\II+\LL)^{-1}\ee_j$. We then derive the estimations of $\rho_{ij}$ by executing  algorithms  \textsc{SFQ} and  \textsc{SFQPlus} four times   each. We randomly select  400 pairs of distinct nodes and calculate  their forest closeness centrality. The settings for these experiments are consistent with those previously mentioned. The results are displayed in Figure~\ref{f2}. From these results, it is evident that the \textsc{SFQPlus} algorithm achieves better accuracy than \textsc{SFQ} in both static and updated graphs.

\begin{figure}[htbp!]
	\centering
	\includegraphics[width=1\columnwidth]{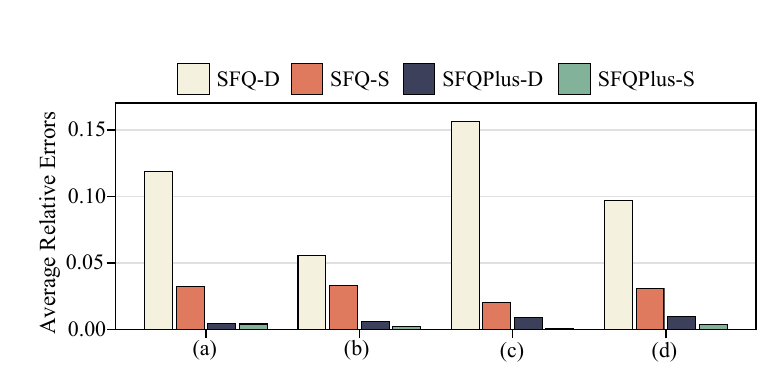}
	\caption{Comparison of  average   relative errors of the forest closeness centrality measures for  algorithms \textsc{SFQ} and \textsc{SFQPlus} on four   graphs: web-Web-Stanford (a), Delicious (b), web-Google    (c), Livejournal(d), where suffix  -S  indicates the results on static graphs and  -D  denotes results on the updated graphs.  }\label{f2}	
\end{figure}

\subsection{Efficiency and Scalability}
 
As illustrated above, the \textsc{SFQPlus} algorithm achieves satisfactory accuracy in comparison to the ground truth. In this subsection, we show the efficiency and scalability of our query algorithms and update strategies on various networks. The results, summarized in Table~\ref{time}, provide a clear indication of these findings. For each network, the \textsc{SFQ} and \textsc{SFQPlus} algorithms are initially executed 100 times, yielding an average time denoted as \textsc{SFQ-S} and \textsc{SFQPlus-S}, respectively,  in the table. Subsequently, the graph undergoes 100 updates, comprising 50 random edge insertions and 50 random edge deletions. The update time is calculated as the average of 100 executions, 50 executions of the \textsc{Insert-Update} algorithm and 50 executions of the \textsc{Delete-Update} algorithm.  After the updates, the \textsc{SFQ} and \textsc{SFQPlus} algorithms are run another 100 times on the updated graph, with the average runtime recorded as \textsc{SFQ-D} and \textsc{SFQPlus-D}. To obtain the ground truth, a fast Laplacian solver for undirected graphs and GMRES for directed graphs are used, with the execution time noted in the Solver column of the table.

From these results, it is observable that under identical conditions, the \textsc{SFQPlus} algorithm typically requires a slightly longer time than \textsc{SFQ}. Moreover, the execution time for both algorithms marginally increases after the updates. Importantly, the time for queries and updates shows insensitivity to the size of the network. This supports the prior analysis that the query and update times are \(O(1)\), significantly less than the time required by the linear solver to determine the ground truth. Remarkably, for two massive networks, Livejournal and Twitter,  both  of which contain over 10 million nodes, the solver fails to run due to time and memory constraints, whereas our algorithms still perform effectively. Our algorithms consistently return results for a single query operation within one second, demonstrating their efficiency and scalability for large-scale network analysis.

\begin{table}[htbp!]\setlength{\tabcolsep}{.8 mm} \fontsize{8}{12}\caption{The running time(seconds)  of the linear solver and    \textsc{SFQ} and \textsc{SFQPlus}  algorithms, as well as the update time, where suffix  -S  indicates the results on static graphs and  -D  denotes results on the updated graphs.   }
\begin{tabular}{ccccccc}
\hline
\multirow{2}{*}{\textbf{Network}} & \multicolumn{6}{c}{\textbf{Running Time(seconds)}}       \\ \cline{2-7} 
                                  & \textsc{SFQ-S}  & \textsc{SFQ-D} & \textsc{SFQPlus-S} & \textsc{SFQPlus-D} & Update & Solver \\ \hline
Stanford                      & 0.0008 & 0.027 & 0.0013    & 0.028     & 0.097  & 22.17  \\
Delicious                         & 0.0008 & 0.097 & 0.0012    & 0.098     & 0.429  & 50.06  \\
Google                        & 0.0007 & 0.153 & 0.0013    & 0.157     & 0.485  & 81.61  \\
Youtube                           & 0.0009 & 0.252 & 0.0014    & 0.256     & 0.885  & 255.64 \\
Livejournal                       & 0.0009 & 0.314 & 0.0014    & 0.362     & 1.121  & -      \\
Twitter                           & 0.0011 & 0.421 & 0.0021    & 0.489     & 1.893  & -      \\ \hline
\end{tabular}\label{time}
\end{table}

\section{Conclusions}
In this paper, we addressed the problem of efficiently querying the entries of the forest matrix of a dynamically evolving graph.  Leveraging an extension of Wilson's algorithm, we presented an  algorithm \textsc{SFQ} as our foundational approach. We further enhanced this foundation and developed \textsc{SFQPlus}, an advanced algorithm that incorporates an innovative variance reduction technique to improve the accuracy of estimations for the  entries of forest matrix.  Additionally, we proposed innovative forest updating techniques to manage evolving graphs, including edge additions and deletions. Our methods maintains $O(1)$ time complexity for both updates and queries, while ensuring unbiased estimates of the entries  of the forest matrix entries. Extensive experimentation on various real-world networks validates the effectiveness and efficiency of our algorithms.  Moreover, our algorithms are scalable to massive graphs with over forty million nodes.

In future work, we plan to extend our algorithms to other problems on evolving graphs, such as dynamically solving linear systems associated with the forest matrix, thereby broadening their applicability in network analysis.

\section*{Acknowledgements}
This work was supported by the National Natural Science Foundation of China (Nos. 62372112, U20B2051, and 61872093).

\bibliographystyle{ACM-Reference-Format}
\balance
\bibliography{main}

\clearpage

\end{document}